
%
\documentclass[12pt]{amsart}

\usepackage{amsfonts}
\usepackage{amssymb}
\usepackage{amsmath}
\usepackage{amsthm}
\usepackage{mathrsfs}
\usepackage{enumerate}
\usepackage{dsfont}
\usepackage{yfonts}
\usepackage{mathbbol}

\usepackage[arrow, matrix, curve]{xy}
\xymatrixcolsep{0.5cm}
\xymatrixrowsep{0.5cm}
\theoremstyle{plain}

\newtheorem{condition}{Condition}

\newtheorem{Def}[condition]{Definition}

\newtheorem{lemma}[condition]{Lemma}

\newtheorem{prop}[condition]{Proposition}
\newtheorem{theorem}[condition]{Theorem}
\theoremstyle{remark}
\newtheorem{remark}[condition]{Remark}

\numberwithin{equation}{section}
\numberwithin{condition}{section}

\DeclareMathOperator{\ad}{ad}

\DeclareMathOperator*{\slim}{s-lim}

\DeclareMathOperator*{\supp}{supp}
\DeclareMathOperator{\ran}{ran}

\addtolength{\textwidth}{2cm}
\addtolength{\hoffset}{-1cm}

\title[Regularity of eigenstates]{Regularity of eigenstates in regular Mourre-theory}
\author{Jacob S. M\o ller}
\address[J. M\o ller and M. Westrich]
{Dept. of Mathematics,\newline%
\indent Aarhus Universitet}%
\email[J. M\o ller]{jacob@imf.au.dk}%
\author{Matthias Westrich}
\email[M. Westrich]{westrich@imf.au.dk}%
\date{02.07.2010}
\begin{document}

\begin{abstract}
The present paper gives an abstract method to prove that possibly embedded eigenstates of a self-adjoint operator $H$ lie in the domain of the $k^{th}$ power of a conjugate operator $A$. Conjugate means here that $H$ and $A$ have a positive commutator locally near the relevant eigenvalue in the sense of Mourre. The only requirement is $C^{k+1}(A)$ regularity of $H$. Regarding integer $k$, our result is optimal. Under a natural boundedness assumption of the multiple commutators we prove that the eigenstate 'dilated' by $\exp(i\theta A)$ is analytic in a strip around the real axis. In particular, the eigenstate is an analytic vector with respect to $A$. Natural applications are 'dilation analytic' systems satisfying a Mourre estimate, where our result can be viewed as an abstract version of a theorem due to Balslev and Combes, \cite{BalslevCombes1971}. As a new application we consider the massive Spin-Boson Model.
\end{abstract}
\maketitle

\section{Introduction and main results}

In this paper we study regularity of eigenstates $\psi$
of a self-adjoint operator $H$, with respect to an auxiliary
operator $A$ for which $i[H,A]$ satisfies a so-called Mourre estimate near the
associated eigenvalue $\lambda$. Our results are partly an extract of
a recent work of Faupin, Skibsted and one of us \cite{FaupinMoellerSkibsted2010a}, 
and partly an improvement of a result of Cattaneo, 
Graf and Hunziker \cite{CattaneoGrafHunziker2006}.  We consider in the present work the case of regular Mourre theory, where the derivation of the bounds on $A^k\psi$ is simpler compared to \cite{FaupinMoellerSkibsted2010a}. In fact we derive explicit bounds which are independent of proof technical constructions. The bounds are good enough to formulate a reasonable condition on the growth of norms of multiple commutators which ensures that eigenstates are analytic vectors with respect to A. We discuss how these growth conditions may be checked in concrete examples and illustrate this for dilation analytic $N$-body Hamitonians and the massive Spin-Boson Model.

The general strategy in this paper, as well as in \cite{CattaneoGrafHunziker2006} and \cite{FaupinMoellerSkibsted2010a},
is to implement a Froese-Herbst type argument in an abstract setting.
In a formal computation the Mourre estimate suffices to extract
results of the type presented here but to make the argument 
rigorous one has to impose
enough conditions on the pair of operators $H$ and $A$ to enable a 
calculus of operators. This is usually done by requiring
a number of iterated commutators between $H$ and $A$ to exist
and be controlled by
operators already present in the calculus. The type of conditions imposed 
is typically guided by a set of applications that the authors have in mind.
Most examples, like many-body quantum systems with or without external 
classical fields, have been possible to treat using natural extensions of 
conditions originally introduced by Mourre in \cite{Mourre1981}. The same goes for
a number of models in non-relativistic QED like confined 
massive Pauli-Fierz models and massless models, with $A$ being the generator 
of dilations. These are the type of conditions used in \cite{CattaneoGrafHunziker2006}.

Over the last 10 years a number of models that fall outside the scope of
Mourre's original conditions, and hence not covered by \cite{CattaneoGrafHunziker2006},
have appeared. We split them in two types. The first type
are models that, while not covered by Mourre type conditions on iterated 
commutators, still satisfy weaker conditions developed  over some years by
Amrein, Boutet de Monvel, Georgescu and Sahbani \cite{AmreinBoutetdeMonvelGeorgescu1996,Sahbani1997}. 
These conditions play the same role as Mourre's original conditions in 
that they enable the same type of calculus of the operators $H$ and $A$.
We call this setting for \emph{regular Mourre theory}. Examples of models
that fall in this category but are not covered by Mourre type conditions 
as in \cite{CattaneoGrafHunziker2006}, are: $P(\phi)_2$-models \cite{DerezinskiGerard2000} 
(with $P(\varphi)\neq \varphi^4$), 
the renormalised massive Nelson model \cite{Ammari2000},
Pauli-Fierz type models without confining potential \cite{FroehlichGriesemerSchlein2001}, the standard model of non-relativistic QED near the ground state energy, where only local $C^k$ conditions are available, \cite{FroehlichGriesemerSigal2008},
and the translation invariant massive Nelson model \cite{MoellerRasmussen2010a}.

The second type of models we wish to highlight are those for
which the commutator $H'= i[H,A]$ is not comparable to $H$ (or $A$).
Here one views the commutator as a new operator in the calculus
and impose assumptions of mixed iterated commutators between the three
possibly unbounded operators $H,A$ and $H'$. This type of analysis goes back
to \cite{HuebnerSpohn1995b,Skibsted1998} and was further developed in \cite{MoellerSkibsted2004} and \cite{GeorgescuGerardMoeller2004a}.
This situation we call \emph{singular Mourre theory} 
and is the topic considered in \cite{FaupinMoellerSkibsted2010a}. There are two examples where this
type of analysis is natural. The first is massless Pauli-Fierz models
with $A$ being the generator of radial translations 
\cite{DerezinskiJaksic2001,GeorgescuGerardMoeller2004b,FaupinMoellerSkibsted2010a,FaupinMoellerSkibsted2010b,Skibsted1998,Golenia2009} and the second is many-body systems 
with time-periodic pair-potentials, in particular AC-Stark Hamiltonians, 
\cite{MoellerSkibsted2004,FaupinMoellerSkibsted2010a}. The technical complications arising from having to deal 
with a calculus of three unbounded operators are significant.

Part of the motivation of this work is  to extract the essence of \cite{FaupinMoellerSkibsted2010a}
in the context of regular Mourre theory, where the technical overhead is more
manageable.

A second motivating factor is drawn from the paper \cite{FaupinMoellerSkibsted2010b},
which is in fact intimately connected to \cite{FaupinMoellerSkibsted2010a}. We remind the reader 
of the Fermi Golden Rule (FGR) which we now formulate. 
Let $P$ denote the orthogonal projection
onto the span of the eigenvector $\psi$, and abbreviate $\bar{P} = I-P$. 
The FGR states that a, for simplicity isolated and simple, 
embedded eigenvalue is unstable under a  perturbation $W$ provided
\begin{equation}\label{FGR}
\Im \lim_{\epsilon\to 0+} 
\langle W\psi,\bar{P}(\bar{H}-\lambda-i\epsilon)^{-1}\bar{P}W\psi\rangle \neq 0.
\end{equation}
Here $\bar{H} = \bar{P} H\bar{P}$ as an operator on the range of $\bar{P}$.
In the above statement the existence of the limit is of course implicitly assumed.
Due to the presence of the projection $\bar{P}$, the operator $\bar{H}$
has purely continuous spectrum near the eigenvalue $\lambda$, 
and the existence of the limit can thus be inferred from the 
\emph{limiting absorption principle} (LAP). 
The LAP can be deduced using positive commutator estimates, see e.g. \cite{AmreinBoutetdeMonvelGeorgescu1996}, 
provided there exists an auxiliary operator $A$ 
such that $H$ and $A$ satisfy a Mourre estimate near $\lambda$ 
and  $(\bar{H}-i)^{-1}$ admits two bounded commutators with $A$, 
or more precisely $H$ is of class $C^2(\bar{P}A\bar{P})$ 
(see the next subsection).
This implies in particular that $\ran(P)\subseteq\mathcal{D}(A^2)$, i.e. 
$\psi\in \mathcal{D}(A^2)$.
Even by the improvement of \cite{FaupinMoellerSkibsted2010a}, and in turn this paper, we would still
need $H$ to be of class $C^3(A)$ in order to verify this property. 
This would for example preclude application to the model considered in 
\cite{MoellerRasmussen2010a}. In \cite{FaupinMoellerSkibsted2010b} the authors study the limit in 
\eqref{FGR} directly, bypassing the general limiting absorption theorems, 
albeit applying the same differential inequality technique, and prove 
existence of the limit assuming only $\psi\in \mathcal{D}(A)$. 
Combined with \cite{FaupinMoellerSkibsted2010a} (or this paper)
this establishes the existence of the limit in the 
Fermi Golden Rule \cite{FaupinMoellerSkibsted2010b} abstractly under a $C^2(A)$ condition.
The price to pay is that one needs a prior control of the norm $\|A\psi\|$
locally uniformly in possibly existing perturbed eigenstates. While it is clear
that such a locally uniform bound does hold, 
provided all the input in \cite{FaupinMoellerSkibsted2010a}
is controlled locally uniformly in the perturbation, it is however impractical
due to the complexity of the setup to extract such bounds in closed form.
In this paper we do just that in the simpler context of regular Mourre theory.

As a last motivation, we had in mind a consequence of having good 
explicit bounds on the norms $\|A^k\psi\|$. Namely, provided one
imposes natural conditions on the norms of all iterated commutators,
we show as a consequence of our explicit bounds on $\|A^k\psi\|$
that the power series $\sum_{k=1} \frac{(i\theta A)^k}{k!}\psi$ has a 
positive radius of convergence, thus establishing that 
$\psi$ is an analytic vector for $A$.
Here however, we have to work with conditions of the type considered 
in \cite{CattaneoGrafHunziker2006}. Having established analyticity of the map $\theta\mapsto \exp(i\theta A)\psi$ in a ball around $0$ one may observe that this map is actually analytic in a strip around the real axis, and thus this result reproduces a result of Balslev and Combes, \cite[Thm.1]{BalslevCombes1971} on analyticity of dilated non-threshold eigenstates. As an example of a new result, we prove for the massive Spin-Boson Model that non-threshold eigenstates are analytic vectors with respect to the second quantised generator of dilations.


\subsection{Commutator Calculus}

We pause to introduce the commutator calculus of \cite{AmreinBoutetdeMonvelGeorgescu1996} 
before formulating our main results. Let $A$ be a self-adjoint operator with domain $\mathcal{D}(A)$ in a Hilbert space $\mathcal{H}$. We denote with $\mathfrak{B}(X,Y)$ the set of bounded operators on the normed space $X$ with images in the normed space $Y$ and $\mathfrak{B}(X):=\mathfrak{B}(X,X)$.
\begin{Def}
A bounded operator $B\in\mathfrak{B}(\mathcal{H})$ is said to be of class $C^k(A)$, in short $B\in C^k(A)$, if
\begin{equation}
\label{eq:Ckdef}
\mathds{R}\ni t\mapsto e^{itA}Be^{-itA}
\end{equation}is strongly in $C^k(\mathds{R})$. A, possibly unbounded self-adjoint operator $S$ is said to be of class $C^k(A)$ if $(i-S)^{-1}\in C^k(A)$.
\end{Def}The property, that $B\in\mathfrak{B}(\mathcal{H})$ is of class $C^1(A)$ is equivalent to the statement that
\begin{equation*}
(\phi,[B,A]\chi):=(B^*\phi,A\chi)-(A\phi,B\chi),\,\,\forall \phi,\chi\in \mathcal{D}(A)
\end{equation*}extends to a bounded form on $\mathcal{H}\times\mathcal{H}$, which in turn is implemented by a bounded operator, $\ad_A(B)$, see e.g. \cite{GeorgescuGerardMoeller2004b}. If $B\in C^2(A)$, then an argument using Duhamel's formula shows $\ad_A(B)\in C^1(A)$ and thus there exists a bounded extension of the form $[\ad_A(B),A]$. Thus, one constructs for $B\in C^k(A)$ iteratively the bounded operator $\ad_A^k(B):=\ad_A(\ad_A^{(k-1)}(B))$. We set $\ad^0_A(B):=B$.

Commutators involving two possibly unbounded self-adjoint operators $H$ and $A$ will in general not extend to bounded operators on $\mathcal{H}$ and the definition of the quadratic form $[H,A]$ requires further restrictions on its domain. Thus we denote by $[H,A]$ the form
\begin{equation*}
(\phi,[H,A]\chi):=(H\phi,A\chi)-(A\phi,H\chi),\,\,\forall \phi,\chi\in \mathcal{D}(A)\cap \mathcal{D}(H).
\end{equation*}If $H\in C^1(A)$, then $\mathcal{D}(A)\cap \mathcal{D}(H)$ is dense in $\mathcal{D}(H)$ in the graph norm of $H$ and $[H,A]$ extends to a $H$-form bounded quadratic form, which in turn defines an unique element of $\mathfrak{B}(\mathcal{D}(H),\mathcal{D}(H)^*)$ denoted by
\begin{equation*}
\ad_A(H):\mathcal{D}(H)\to \mathcal{D}(H)^*,
\end{equation*}see \cite{GeorgescuGerardMoeller2004a}. The space  $\mathcal{D}(H)^*$ is the dual of $\mathcal{D}(H)$ in the sense of rigged Hilbert spaces.
\\

Our result on the analyticity of eigenvectors of $H$ with respect to $A$ requires a construction of multiple commutators of $H$ and $A$ which are bounded as maps from $\mathcal{D}(H)$ to $\mathcal{H}$ in the graph norm of $H$. The construction is as follows: Let $H\in C^1(A)$. We assume that $\ad_A(H)\in\mathfrak{B}(\mathcal{D}(H),\mathcal{H})$. Then, $[\ad_A(H),A]$ is defined as
\begin{equation}
(\psi,[\ad_A(H),A]\phi):=(-\ad_A(H)\psi,A\phi)-(A\psi,\ad_A(H)\phi),
\end{equation}for all $\psi,\phi\in\mathcal{D}(A)\cap\mathcal{D}(H)$. Here we used, that $\ad_A(H)$ is skew-symmetric on the domain $\mathcal{D}(A)\cap\mathcal{D}(H)$. Assume that this form extends in graph norm of $H$ to a form which is implemented by an element $\ad_A^2(H)\in\mathfrak{B}(\mathcal{D}(H),\mathcal{H})$. Proceeding iteratively, we construct $\ad_A^k(H)\in\mathfrak{B}(\mathcal{D}(H),\mathcal{H})$.
\begin{lemma}
\label{lemmaCkproperty}
Let $H,A$ be self-adjoint operators on the Hilbert space $\mathcal{H}$ and assume $H\in C^1(A)$. If $\ad_A^j(H)\in\mathfrak{B}(\mathcal{D}(H),\mathcal{H})$ for $0\leq j \leq k$, then $H\in C^k(A)$.
\end{lemma}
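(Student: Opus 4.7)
The plan is to argue by induction on $k$, with the base case $k=1$ being the standing hypothesis $H \in C^1(A)$. For the inductive step, set $R := (i - H)^{-1}$ and assume, by induction, that $H \in C^{k-1}(A)$, so that the forms $\ad_A^j(R)$ are already implemented by bounded operators on $\mathcal{H}$ for $j \le k-1$.

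The first substantive step is to establish the identity
\[ \ad_A(R) = R \, \ad_A(H) \, R, \]
which I would verify by comparing both sides as quadratic forms on $\mathcal{D}(A)$. The right-hand side is bounded on $\mathcal{H}$ because $R$ is a bounded map $\mathcal{H} \to \mathcal{D}(H)$ (in the graph norm) and, by hypothesis, $\ad_A(H)$ is a bounded map $\mathcal{D}(H) \to \mathcal{H}$.

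Iterating this identity by means of the Leibniz-type rule $\ad_A(XY) = \ad_A(X)Y + X \, \ad_A(Y)$, I would prove inductively on $j$ that $\ad_A^j(R)$ equals a finite sum of terms of the shape
\[ R \, \ad_A^{\alpha_1}(H) \, R \, \ad_A^{\alpha_2}(H) \, R \cdots R \, \ad_A^{\alpha_m}(H) \, R, \qquad \alpha_i \ge 1, \qquad \textstyle\sum_i \alpha_i = j. \]
Each such term is a composition of bounded maps $\mathcal{H} \to \mathcal{D}(H) \to \mathcal{H} \to \mathcal{D}(H) \to \cdots \to \mathcal{H}$ and is therefore a bounded operator on $\mathcal{H}$. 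Taking $j = k$ and combining with the induction hypothesis, this gives $R \in C^k(A)$ via the standard characterisation of $C^k(A)$ by iterated bounded commutators (see \cite{AmreinBoutetdeMonvelGeorgescu1996}), which is the definition of $H \in C^k(A)$.

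The hard part will be to justify the Leibniz-type expansion rigorously, since the intermediate operators $\ad_A^j(H)$ are unbounded on $\mathcal{H}$ and one cannot naively commute $A$ past them. The remedy is to work throughout at the level of quadratic forms on $\mathcal{D}(A) \cap \mathcal{D}(H)$, which is dense in $\mathcal{D}(H)$ in the graph norm thanks to $H \in C^1(A)$, and to check at each inductive stage that the intermediate forms are implemented by the displayed bounded compositions \emph{before} applying $\ad_A$ one more time; this bookkeeping is what turns the formal computation into a valid argument.
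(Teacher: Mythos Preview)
Your proposal is correct and follows essentially the same route as the paper: both argue by induction, expanding $\ad_A^{k}\big((z-H)^{-1}\big)$ as a finite sum of products $J(z)\,\ad_A^{\alpha_1}(H)\,J(z)\cdots J(z)\,\ad_A^{\alpha_m}(H)\,J(z)$ with $\sum\alpha_i=k$, each of which is bounded since $\ad_A^{\alpha}(H)\in\mathfrak{B}(\mathcal{D}(H),\mathcal{H})$ and $J(z)\in\mathfrak{B}(\mathcal{H},\mathcal{D}(H))$. The paper's implementation of your ``hard part'' is to group the product into bounded blocks $\ad_A^{m}(H)J(z)$, verify directly by a form computation on $\mathcal{D}(A)\cap\mathcal{D}(H)$ (using that $J(z):\mathcal{D}(A)\to\mathcal{D}(A)$ and $AJ(z)\psi=J(z)A\psi+J(z)\ad_A(H)J(z)\psi$) that each block lies in $C^1(A)$, and then invoke the algebra property of bounded $C^1(A)$ operators---which is exactly a clean way to carry out the Leibniz bookkeeping you describe.
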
The proof of this lemma may be found in Section \ref{sectanalyticity}.\\

In several places we need an appropriate class of functions to regularise the self-adjoint operators $H,A$, defined on $\mathcal{D}(H),\mathcal{D}(A)$ respectively, and enable a calculus for them.
\begin{Def}
Define $\mathscr{B}:=\big\{r\in C_b^{\infty}(\mathds{R},\mathds{R})\big|r'(0)=1,\,\,r(0)=0,\,\,\forall k\in\mathds{N}:\,\,\sup_{t\in\mathds{R}}|r^{k}(t)\langle t\rangle^{k}|<\infty,\,\,\textrm{$r$ $\mathrm{is}$ $\mathrm{real}$ $\mathrm{analytic}$ $\mathrm{in}$ $\mathrm{some}$ $\mathrm{ball}$ $\mathrm{around}$ $0$}\big\}$.
\end{Def}

Let $h\in\mathscr{B}$. For $\lambda\neq 0$ redefine $h_{\lambda}(x):=h(x-\lambda)$. In the following we will drop the index $\lambda$ as well as the argument of $h_{\lambda}(H)$ and other regularisations of $H$ and $A$, if the context is clear. The following condition is a local $C^1(A)$ condition, as in \cite{Sahbani1997}, plus a Mourre estimate.
\begin{condition}
Let $H,A$ be self-adjoint operators on $\mathcal{H}$ and $\lambda\in\mathds{R}$. There exists an $h\in \mathscr{B}$, $h_{\lambda}(s):=h(s-\lambda)$, with $h_{\lambda}(H)\in C^1(A)$ and an $f_{\mathrm{loc}}\in C_0^{\infty}(\mathds{R},[0,1])$, such that $f_{\mathrm{loc}}(\lambda)=1$ and $h_{\lambda}'(x)>0$ for all $x\in\supp(f_{\mathrm{loc}})$. Assume there is a \emph{smooth Mourre estimate}, i.e. $\exists C_0,C_1>0$ and a compact operator $K$, such that
\begin{equation}
i\ad_A(h_{\lambda}(H))\geq C_0-C_1f_{\mathrm{loc},\perp}^2(H)-K.
\label{eq:smoothmourre}
\end{equation}$f_{\mathrm{loc},\perp}$ is defined as $f_{\mathrm{loc},\perp}:=1-f_{\mathrm{loc}}$.
\label{condmourre1}
\end{condition}
\begin{remark}
\begin{enumerate}
\item The requirement $h'_{\lambda}(x)>0$, $\forall x\in\supp(f_{\mathrm{loc}})$, implies $f_{\mathrm{loc}}\in C^k(A)$ if $h_{\lambda}\in C^{k}(A)$ for $k\in\mathds{N}$, since $h_{\lambda}$ is smoothly invertible (on each connected component of $\supp(f_{\mathrm{loc}})$) and $f_{\mathrm{loc}}$ may be written as a smooth function of $h_{\lambda}$.
\item The assumption of $K$ being compact is not necessary. In fact we could replace this by the requirement that $\textbf{1}_{|A|\geq\Lambda}K$, where $\textbf{1}_{|A|\geq\Lambda}$ denotes the spectral projection on $[\Lambda,\infty)$, can be made arbitrarily small.
\item For a discussion of the 'local' Mourre estimate (\ref{eq:smoothmourre}) with the standard form of the Mourre estimate see Section \ref{mourre}.
\end{enumerate}
\end{remark}
\begin{theorem}[Finite regularity]
\label{finitereg}
Let $H,A$ be self-adjoint operators on the Hilbert space $\mathcal{H}$ and $\psi$ be an eigenvector of $H$ with eigenvalue $\lambda$. Assume Condition \ref{condmourre1} to be satisfied with respect to $\lambda$ and $h_{\lambda}(H)\in C^{k+1}(A)$ for some $k\in\mathds{N}$. There exists $c_k>0$, only depending on $\supp(f_{\mathrm{loc}})$, $C_0$, $C_1$, $K$, $\|\ad_A^{\ell}(f_{\mathrm{loc}}(H))\|$, $\|\ad_A^j(h_{\lambda}(H))\|$, $1\leq\ell\leq k$, $1\leq j\leq k+1$, such that
\begin{equation}
\left\|A^k\psi\right\|\leq c_k\left\|\psi\right\|.
\label{eq:localuniformbound}
\end{equation}
\end{theorem}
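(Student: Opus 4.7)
The plan is an induction on $k$. The case $k = 0$ is trivial; for $k = 1$, the conclusion $\psi \in \mathcal{D}(A)$ under Condition~\ref{condmourre1} is a standard consequence of the Mourre--virial machinery developed in \cite{AmreinBoutetdeMonvelGeorgescu1996, Sahbani1997}, and one obtains the quantitative bound by tracking constants. For the inductive step, fix $k \ge 2$ and assume $\|A^j\psi\| \le c_j\|\psi\|$ for all $j \le k-1$, with $c_j$ depending only on the data listed in the theorem. Because $A^k\psi$ is not a priori known to be defined, I regularise: choose $g_\epsilon \in C_0^\infty(\mathds{R})$ with $g_\epsilon(x) \to x$ pointwise and $|g_\epsilon(x)| \le |x|$, and set $A_\epsilon := g_\epsilon(A)$. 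Then $A_\epsilon$ is bounded, commutes with $A$, and $A_\epsilon^j\psi \to A^j\psi$ for $j \le k-1$. It suffices to prove $\|A_\epsilon^k\psi\| \le c_k\|\psi\|$ uniformly in $\epsilon$: the spectral theorem for $A$ together with Fatou's lemma then yields $\psi \in \mathcal{D}(A^k)$ and the desired estimate.

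Apply Condition~\ref{condmourre1} to the test vector $\phi_\epsilon := A_\epsilon^k\psi \in \mathcal{D}(A)$:
\begin{equation*}
M_\epsilon := \langle\phi_\epsilon, i\ad_A(h_\lambda(H))\phi_\epsilon\rangle \;\ge\; C_0\|\phi_\epsilon\|^2 - C_1\|f_{\mathrm{loc},\perp}(H)\phi_\epsilon\|^2 - \langle\phi_\epsilon, K\phi_\epsilon\rangle.
\end{equation*}
I control the correction terms via the inductive hypothesis. Since $f_{\mathrm{loc}}(\lambda) = 1$, one has $f_{\mathrm{loc},\perp}(H)\psi = 0$, so $f_{\mathrm{loc},\perp}(H)\phi_\epsilon = [f_{\mathrm{loc},\perp}(H), A_\epsilon^k]\psi$. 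Expanding this commutator (using $f_{\mathrm{loc}}(H) \in C^k(A)$ from the first item of the Remark) yields a sum of terms of the form $\|\ad_A^\ell(f_{\mathrm{loc}}(H))\| \cdot \|A_\epsilon^{k-\ell}\psi\|$ for $1 \le \ell \le k$, each bounded by the inductive hypothesis. For the compact term, decompose $K = K\mathbf{1}_{\{|A| \le \Lambda\}} + K\mathbf{1}_{\{|A| > \Lambda\}}$ with $\Lambda$ large enough that $\|K\mathbf{1}_{\{|A| > \Lambda\}}\| \le C_0/2$ (possible by compactness); the tail is absorbed into the left-hand side, while the bulk is controlled via $\|\mathbf{1}_{\{|A| \le \Lambda\}}\phi_\epsilon\| \le \Lambda^k\|\psi\|$ from the functional calculus.

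The crux is the upper bound on $M_\epsilon$. Using $h_\lambda(H)\psi = 0$ and the skew-symmetry identity $\langle\phi, i[B,A]\phi\rangle = 2\,\Im\langle A\phi, B\phi\rangle$, I rewrite
\begin{equation*}
M_\epsilon \;=\; 2\,\Im\,\langle A\phi_\epsilon, [h_\lambda(H), A_\epsilon^k]\psi\rangle.
\end{equation*}
The commutator $[h_\lambda(H), A_\epsilon^k]$ is then expanded, modelled on the exact Leibniz identity $[B, A^k] = \sum_{j=1}^{k}(-1)^{j+1}\binom{k}{j}\ad_A^j(B)A^{k-j}$, and made rigorous by applying a Helffer--Sj\"ostrand calculus to $g_\epsilon^k$. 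The $j = 1$ contribution, after one further integration-by-parts moving the outer $A$ onto $A_\epsilon^{k-1}$ to reconstitute a factor of $\phi_\epsilon$, produces a numerical multiple of $M_\epsilon$ itself plus an inductively controlled remainder involving $\ad_A^2(h_\lambda(H))A_\epsilon^{k-1}\psi$. The $j \ge 2$ terms involve only $A_\epsilon^{k-j}\psi$ with $k - j \le k - 2$, and after one further rearrangement pair with $A\phi_\epsilon$ into expressions bounded by $\|\phi_\epsilon\|$ times inductive data, to be absorbed by AM--GM into $(C_0/2)\|\phi_\epsilon\|^2$. Solving for $M_\epsilon$ and combining with the Mourre lower bound yields $\|\phi_\epsilon\| \le c_k\|\psi\|$ uniformly in $\epsilon$, completing the induction.

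The main technical obstacle I expect is making the expansion of $[h_\lambda(H), A_\epsilon^k]$ rigorous and uniformly controlled in $\epsilon$: the regulariser $g_\epsilon$ must be chosen so that the derivatives $g_\epsilon^{(r)}(A)$ and the Helffer--Sj\"ostrand residuals stay bounded in operator norm independently of $\epsilon$. This is precisely where the assumption $h_\lambda(H) \in C^{k+1}(A)$—one order stronger than the naive count—enters, providing boundedness of the highest-order commutator $\ad_A^{k+1}(h_\lambda(H))$ needed to close the remainder.
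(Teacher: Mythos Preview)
Your strategy is the same as the paper's --- induction on $k$, a regularised virial identity, the Mourre estimate as a lower bound, and a Leibniz expansion of $[h_\lambda(H),A_\epsilon^k]$ to bound $M_\epsilon$ from above. Your treatment of the $f_{\mathrm{loc},\perp}$-term and of the compact $K$ via the splitting $\mathbf{1}_{|A|\leq\Lambda}+\mathbf{1}_{|A|>\Lambda}$ matches the paper almost verbatim, and your handling of the $j\geq 2$ terms (moving the extra $A$ across to produce $\ad_A^{j+1}(h)$, which is where $C^{k+1}(A)$ enters) is correct.

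The gap is in the $j=1$ term. After moving the outer $A$ to the right and commuting past $\ad_{A_\epsilon}(h)$ (this commutator indeed gives your ``$\ad_A^2(h)A_\epsilon^{k-1}\psi$'' remainder), the surviving main piece is
\[
2k\,\Im\,\langle \phi_\epsilon,\ \ad_{A_\epsilon}(h)\,A\,A_\epsilon^{k-1}\psi\rangle.
\]
This is \emph{not} a numerical multiple of $M_\epsilon$: the obstruction is that $A\,A_\epsilon^{k-1}\neq c\,A_\epsilon^k$, since $t/g_\epsilon(t)$ is unbounded for any bounded regulariser $g_\epsilon$. If you try to absorb this term into $(1-c)M_\epsilon$ you cannot, because no such constant $c$ exists uniformly in $\epsilon$.

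The paper resolves this by a specific choice of regulariser: pick $g\in\mathscr{B}$ so that $p(t)^2:=tg'(t)/g(t)$ has a smooth square root. Then, after writing $\ad_{A_\epsilon}(h)=g_\epsilon'\ad_A(h)+R$, the identity $g_\epsilon'(A)\,A=p_\epsilon^2\,A_\epsilon$ converts the $j=1$ main piece into
\[
I_0'\ :=\ 2k\,\langle p_\epsilon\phi_\epsilon,\ i\ad_A(h)\,p_\epsilon\phi_\epsilon\rangle
\]
(plus a controllable commutator $[p_\epsilon,\ad_A(h)]$). This $I_0'$ is not a multiple of $M_\epsilon$ but is a \emph{second, independently Mourre-positive} quadratic form in $p_\epsilon\phi_\epsilon$; one then applies the Mourre lower bound to $I_0+I_0'$ together. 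Without this $p_\epsilon$-device (or an equivalent), the $j=1$ term has the wrong structure to close the estimate. Your closing paragraph correctly flags that the choice of $g_\epsilon$ is where the difficulty lies, but the required property is more specific than uniform control of derivatives: you need $tg_\epsilon'(t)/g_\epsilon(t)$ to be a bounded nonnegative function with a smooth square root.
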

\begin{remark}
In \cite[Ex. 1.4]{FaupinMoellerSkibsted2010a} it is shown, that the statement of Theorem \ref{finitereg} is false in general if one requires $h_{\lambda}\in C^{k}(A)$ only. Therefore, the result is optimal concerning integer values of $k$.
\end{remark}
\begin{condition}The self-adjoint operator $H$ is of class $C^1(A)$ and there exists a $v>0$, such that for all $k\in\mathds{N}$
\begin{equation}
\|\ad^k_A(H)(i-H)^{-1}\|\leq k!v^{-k}.
\label{eq:condad}
\end{equation}
\label{condad}
\end{condition}
\begin{theorem}[Analyticity]
\label{analyticity} 
Let $H,A$ be self-adjoint operators on the Hilbert space $\mathcal{H}$ and $\psi$ be an eigenvector of $H$ with eigenvalue $\lambda$. Assume Condition \ref{condmourre1} to be satisfied with respect to $\lambda$ and that Condition \ref{condad} holds. Then, the map
\begin{equation}
\mathds{R}\ni\theta\mapsto e^{i\theta A}\psi\in\mathcal{H}
\label{eq:stripana}
\end{equation}extends to an analytic function in a strip around the real axis.
\end{theorem}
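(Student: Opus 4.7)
The plan is to use the explicit bounds of Theorem~\ref{finitereg} together with Condition~\ref{condad} to show that $\psi$ is an analytic vector for $A$, and then to upgrade the resulting disk of analyticity to a strip via the spectral theorem. The argument splits naturally into three steps.

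First, I would revisit the proof of Theorem~\ref{finitereg} and extract the explicit dependence of $c_k$ on the commutator norms $\|\ad_A^\ell(f_{\mathrm{loc}}(H))\|$, $1\le\ell\le k$, and $\|\ad_A^j(h_\lambda(H))\|$, $1\le j\le k+1$. Because the underlying Froese--Herbst iteration is explicit, one expects a schematic bound of the form $c_k\le Q^k\,\max_{1\le j\le k+1}\|\ad_A^j(h_\lambda(H))\|\cdot\prod_{\ell=1}^k\bigl(1+\|\ad_A^\ell(f_{\mathrm{loc}}(H))\|\bigr)$, with $Q>0$ independent of $k$, so that factorial growth of the commutator norms in $j$ translates into factorial growth of $c_k$ in $k$.

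Second, I would translate Condition~\ref{condad} into factorial bounds on the commutator norms just mentioned. Using a Helffer--Sj\"ostrand representation $\phi(H)=\tfrac{1}{\pi}\int\bar\partial\tilde\phi(z)(z-H)^{-1}\,d^2z$ for $\phi\in\{h_\lambda,f_{\mathrm{loc}}\}$, iterated commutators expand into sums of products of resolvents $(z-H)^{-1}$ interleaved with factors $\ad_A^{k_i}(H)(i-H)^{-1}$ where $k_1+\cdots+k_m=j$; the latter are bounded by $k_i!v^{-k_i}$ thanks to Condition~\ref{condad}. Choosing the almost analytic extension of $h_\lambda$ that exploits the real-analyticity of $h$ in a ball around $0$, together with the multinomial estimate $\sum_{k_1+\cdots+k_m=j}k_1!\cdots k_m!\le C^j j!$, yields $\|\ad_A^j(h_\lambda(H))\|\le K j!R^{-j}$ for some $K,R>0$, and the analogous estimate for $\ad_A^j(f_{\mathrm{loc}}(H))$ follows from the same argument (using that $f_{\mathrm{loc}}$ is a smooth function of $h_\lambda$ on $\supp(f_{\mathrm{loc}})$ by the condition $h_\lambda'>0$ there).

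Combining Steps~1 and~2 produces $\|A^k\psi\|\le\tilde C\,k!\,\rho^{-k}$ for some $\tilde C,\rho>0$, so that the series $\sum_{k\ge 0}\frac{(i\theta A)^k}{k!}\psi$ converges in $\mathcal{H}$ for $|\theta|<\rho$. In particular $\psi$ is an analytic vector for the self-adjoint operator $A$, which by the spectral theorem is equivalent to $\int e^{r|s|}\,d\|E_A(s)\psi\|^2<\infty$ for every $r<\rho$, and so the vector-valued function $\theta\mapsto e^{i\theta A}\psi$, initially defined by the unitary group on the real axis, extends holomorphically to the strip $\{\theta\in\mathds{C}:|\Im\theta|<\rho/2\}$. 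The principal obstacle lies in Step~2: keeping genuine $j!$ growth (rather than a coarser $(j!)^2$) through the Helffer--Sj\"ostrand multi-commutator expansion requires a careful choice of almost analytic extension tailored to the analytic radius of $h$, and careful combinatorial accounting of the resolvent--commutator products, which is where the bulk of the technical work resides.
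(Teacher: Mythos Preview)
Your three-step outline has the right shape, but Step~1 as written does not yield the conclusion you want. The schematic bound
\[
c_k\le Q^k\,\max_{1\le j\le k+1}\|\ad_A^j(h_\lambda(H))\|\cdot\prod_{\ell=1}^k\bigl(1+\|\ad_A^\ell(f_{\mathrm{loc}}(H))\|\bigr)
\]
is too crude: under the hypothesis $\|\ad_A^\ell(f_{\mathrm{loc}}(H))\|\le \ell!\,w^{-\ell}$ the product behaves like $\prod_{\ell=1}^k \ell!\,w^{-\ell}$, whose logarithm grows like $\tfrac12 k^2\log k$, i.e.\ super-factorially. Feeding this into your Step~3 would give $\|A^k\psi\|$ growing faster than any $k!\,\rho^{-k}$, and no analyticity follows. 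The point is that in the Froese--Herbst iteration the $f_{\mathrm{loc}}$-commutators do not pile up multiplicatively; each $\ad_A^\ell(f_{\mathrm{loc}}(H))$ occurs \emph{paired} with a lower-order factor $A^{k-\ell}\psi$, inside a sum $\sum_\ell\binom{k}{\ell}\ad_A^\ell(f_{\mathrm{loc}})A^{k-\ell}\psi$ (and similarly for $h$). That binomial structure is exactly what makes an inductive ansatz $\|A^l\psi\|\le l!\,q^{-l}$ close: $\sum_\ell\binom{k}{\ell}\ell!\,w^{-\ell}(k-\ell)!\,q^{-(k-\ell)}=k!\,q^{-k}\sum_\ell(q/w)^\ell$ reproduces $k!\,q^{-k}$ up to a geometric factor. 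So what you need from Step~1 is not a closed-form $c_k$ but an explicit inequality for $\|A^k\psi\|^2$ in terms of the mixed quantities $\ad_A^{j}(h)A^{k-j}\psi$ and $\ad_A^{\ell}(f_{\mathrm{loc}})A^{k-\ell}\psi$, sharp enough that the ansatz propagates. This is precisely the content of the Proposition preceding the proof in the paper (inequality~\eqref{eq:explicitbound1}); you should derive that first and then run the induction in $k$.

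On Step~2 your instinct about the difficulty is right, but the paper avoids it rather than confronts it. Instead of working with a generic $h\in\mathscr{B}$ and a compactly supported $f_{\mathrm{loc}}$ through Helffer--Sj\"ostrand, one fixes the \emph{rational} choices $h(x)=x(1+\nu x^2)^{-1}$ and $f_{\mathrm{ana}}(x)=(1+(x-\lambda)^2)^{-1}$, both linear combinations of resolvents $(z-H)^{-1}$. Their iterated commutators are then computed algebraically via the resolvent expansion \eqref{eq:Mortensformula}, and Condition~\ref{condad} together with the count $|C(k)|=2^{k-1}-1$ gives $\|\ad_A^k(h)\|,\|\ad_A^k(f_{\mathrm{ana}})\|\le k!\,w^{-k}$ directly, with no almost-analytic extension needed. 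Section~\ref{mourre} shows that these specific $h,f_{\mathrm{ana}}$ still verify Condition~\ref{condmourre1}. Your Step~3 is fine and matches the paper's concluding remark.
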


\section{Applications}
\label{applications}
The applications of our result on "finite regularity of eigenstates" are well known and discussed in the literature \cite{AgmonHerbstSkibsted1989,CattaneoGrafHunziker2006,HunzikerSigal2000b,MoellerSkibsted2004,FaupinMoellerSkibsted2010b}. In contrast results on the analyticity of eigenvalues in regular Mourre theory are to our knowledge unknown. Even though the condition under which our result holds appears difficult to verify in concrete situations, we will illustrate for some deformation analytic models that it is strikingly simple to check the assumptions of Theorem \ref{analyticity}.\\
Let $H$ be a self-adjoint operator on the Hilbert space $\mathcal{H}$ and 
$U(t):=\exp(itA)$ a strongly continuous one parameter group of unitary operators $U(t)$. 
The self-adjoint operator $A$ is the generator of this group. 
Assume that $U(t)$ \emph{b-preserves} $\mathcal{D}(H)$, i.e. a
\begin{equation*}
U(t)\mathcal{D}(H)\subseteq\mathcal{D}(H),\,\,\forall t\in\mathds{R}
\ \textup{and} \ \sup_{t\in[-1,1]}\|U(t)\phi\|_{\mathcal{D}(H)}<\infty,\,\,\forall\phi\in\mathcal{D}(H),
\end{equation*}
where $\|\psi\|_{\mathcal{D}(H)}$ denotes the graph norm of $H$. 

\begin{remark}\label{Rem-b-preserve} Observe that the following are equivalent:
\begin{itemize}
\item $U(t)$ b-preserves $\mathcal{D}(H)$.
\item There exists $\mu_0>0$ and $C>0$  such that for all $\mu\in\mathbb{R}$ with $|\mu|\geq \mu_0$,
we have $(A-i\mu)^{-1}:\mathcal{D}(H)\to\mathcal{D}(H)$ and
\[
\|(A-i\mu)^{-1}\|_{\mathfrak{B}(\mathcal{D}(H),\mathcal{H})}\leq C|\mu|^{-1}.
\]
\end{itemize}
\end{remark}

By \cite[Lemma 2.33]{GeorgescuGerardMoeller2004a} one observes that 
$U^{\circ}(\cdot):=U(\cdot)_{|\mathcal{D}(H)}$ 
is a $C_0$-group in the topology of $\mathcal{D}(H)$.

\begin{prop}\label{bdcommutatersandCk}
Let $H,A$ be self-adjoint operators and $U(t):=\exp(itA)$.
Assume that $U(\cdot)$ b-preserves $\mathcal{D}(H)$. 
Then for any $k\in\mathds{N}$ the following statements are equivalent.
\begin{enumerate}
\item $H$ admits $k$ $H$-bounded commutators with $A$, denoted by $\ad^j_A(H)$, $j=1,\dots,k$.
\item The map $t\mapsto I(t)=(\varphi,U(t)HU(t)^*\psi)\in C^k([-1,1])$, for all 
$\psi,\varphi\in\mathcal{D}(H)\cap\mathcal{D}(A)$. There exist $H$-bounded operators $H^{(j)}(0)$, $j=1,\dots,k$, 
such that  $\frac{d^j}{d t^j}I(t)_{|t=0} = (\varphi,H^{(j)}(0)\psi)$, 
for $j=1,\dots,k$ and all $\psi,\varphi\in\mathcal{D}(H)\cap\mathcal{D}(A)$.
\item $t\mapsto \psi(t):=U(t)HU(t)^*\psi \in C^k([-1,1];\mathcal{H})$ for all $\psi\in\mathcal{D}(H)$, and there exist  $H$-bounded operators
$H^{(j)}(0)$, $j=1,\dots,k$, with the property that $\frac{d^j}{d t^j} \psi(t)_{|t=0} = H^{(j)}(0)\psi$, for 
all $j=1,\dots,k$ and $\psi\in \mathcal{D}(H)$.  
\end{enumerate}
If one of the three statements holds, then the pertaining $H$-bounded operators are uniquely determined
and we have
\begin{equation}
i^{j} \ad_A^{j}(H)= (-1)^{j} H^{(j)}(0),\,\,j=1,\dots,k.
\label{eq:H(k)commutators}
\end{equation}
\end{prop}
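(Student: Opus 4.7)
The plan is to prove the cyclic chain $(3)\Rightarrow(2)\Rightarrow(1)\Rightarrow(3)$, with the identification $i^j\ad_A^j(H)=(-1)^j H^{(j)}(0)$ emerging from the last implication and uniqueness of both families of operators falling out as a byproduct (since both are determined by derivatives of $I(t)$ at $t=0$ on the dense subspace $\mathcal{D}(H)\cap\mathcal{D}(A)\subseteq\mathcal{D}(H)$, which is dense by $H\in C^1(A)$).

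The step $(3)\Rightarrow(2)$ is essentially immediate: continuity of the inner product turns the strong $C^k$ map $t\mapsto U(t)HU(t)^*\psi$ into a $C^k$ scalar function $I(t)=(\varphi,U(t)HU(t)^*\psi)$ for $\varphi\in\mathcal{D}(H)\cap\mathcal{D}(A)$, whose $j$-th derivative at $0$ is $(\varphi,H^{(j)}(0)\psi)$.

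For $(2)\Rightarrow(1)$, I would compute $I'(0)$ directly from the difference quotient. Splitting
\[
\frac{1}{t}\bigl(U(t)HU(t)^*\psi-H\psi\bigr)
=\frac{U(t)-I}{t}HU(t)^*\psi+H\,\frac{U(t)^*-I}{t}\psi,
\]
using b-preservation of $\mathcal{D}(H)$ to conclude $HU(t)^*\psi\to H\psi$ in $\mathcal{H}$, and using $\varphi,\psi\in\mathcal{D}(A)$ on the outer factors, one obtains
\[
I'(0)=i(A\varphi,H\psi)-i(H\varphi,A\psi)=-i(\varphi,\ad_A(H)\psi),
\]
where the right-hand side is the form defining $\ad_A(H)$. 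By hypothesis the same expression equals $(\varphi,H^{(1)}(0)\psi)$ with $H^{(1)}(0)$ an $H$-bounded operator; density of $\mathcal{D}(H)\cap\mathcal{D}(A)$ in $\mathcal{D}(H)$ then forces $\ad_A(H)=iH^{(1)}(0)\in\mathfrak{B}(\mathcal{D}(H),\mathcal{H})$. Iterating this on $\ad_A(H)$ (viewed now as an $H$-bounded operator, with the maps $t\mapsto (\varphi,U(t)\ad_A(H)U(t)^*\psi)$ having derivatives related to $H^{(j+1)}(0)$) produces the higher commutators.

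The key and most delicate step is $(1)\Rightarrow(3)$. My plan is to establish the Duhamel identity
\[
U(t)HU(t)^*\psi-H\psi=-i\int_0^t U(s)\,\ad_A(H)\,U(s)^*\psi\,ds
\]
first for $\psi\in\mathcal{D}(H)\cap\mathcal{D}(A)$, where the pointwise derivative $\tfrac{d}{ds}U(s)HU(s)^*\psi=-iU(s)\ad_A(H)U(s)^*\psi$ is legitimate (since $U(s)^*$ preserves both $\mathcal{D}(H)$ and $\mathcal{D}(A)$). The right-hand side is a continuous $\mathcal{H}$-valued function of $s$ because b-preservation, combined with Remark \ref{Rem-b-preserve} and $\ad_A(H)\in\mathfrak{B}(\mathcal{D}(H),\mathcal{H})$, guarantees that $s\mapsto U(s)^*\psi$ is continuous in $\mathcal{D}(H)$ and $\ad_A(H)$ carries it continuously to $\mathcal{H}$. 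Both sides then extend continuously in $\psi\in\mathcal{D}(H)$, so the identity holds on all of $\mathcal{D}(H)$. Differentiating yields $\tfrac{d}{dt}U(t)HU(t)^*\psi=-iU(t)\ad_A(H)U(t)^*\psi$ in $\mathcal{H}$; iterating the same argument with $H$ replaced successively by the $H$-bounded operators $\ad_A^j(H)$ provides the higher strong derivatives and the closed formula $H^{(j)}(0)\psi=(-i)^j\ad_A^j(H)\psi$, equivalent to \eqref{eq:H(k)commutators}.

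The main obstacle is the inductive regularity in $(1)\Rightarrow(3)$: at each iteration one must verify that the integrand $s\mapsto U(s)\ad_A^j(H)U(s)^*\psi$ is not merely continuous but is itself $C^1$ as an $\mathcal{H}$-valued map, which requires that $\ad_A^j(H)$ (playing the role of $H$ at the next level) admit an $H$-bounded commutator with $A$; this is exactly what the hypothesis $\ad_A^{j+1}(H)\in\mathfrak{B}(\mathcal{D}(H),\mathcal{H})$ supplies, but one must be careful that the earlier reasoning (Duhamel plus b-preservation) applies when the operator under conjugation is only $H$-bounded rather than self-adjoint — this is where the bound from Remark \ref{Rem-b-preserve} on $(A-i\mu)^{-1}$ as a map $\mathcal{D}(H)\to\mathcal{D}(H)$ is used to control continuity of $s\mapsto U(s)^*\psi$ in the graph norm.
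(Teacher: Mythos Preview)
Your cyclic argument $(3)\Rightarrow(2)\Rightarrow(1)\Rightarrow(3)$ is correct in outline and mirrors the paper's proof, which runs the cycle in the opposite direction $(1)\Rightarrow(2)\Rightarrow(3)\Rightarrow(1)$. The key technical content---upgrading a weak derivative identity to a strong one via an integral formula, using strong continuity of $U^\circ$ on $\mathcal{D}(H)$---is the same in both, and your Duhamel identity in $(1)\Rightarrow(3)$ is equivalent to the paper's integral estimate in $(2)\Rightarrow(3)$.

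There is one point where your write-up is imprecise. In $(1)\Rightarrow(3)$ you assert that for $\psi\in\mathcal{D}(H)\cap\mathcal{D}(A)$ the \emph{strong} pointwise derivative $\tfrac{d}{ds}U(s)HU(s)^*\psi=-iU(s)\ad_A(H)U(s)^*\psi$ is legitimate because $U(s)^*$ preserves both domains. This does not follow: splitting the difference quotient as you do elsewhere produces terms like $\frac{U(h)-I}{h}H\phi$ or $H\frac{U(h)^*-I}{h}\phi$, and for these to converge in $\mathcal{H}$ one would need $H\phi\in\mathcal{D}(A)$ or $\phi$ in the domain of the generator of $U^\circ$, neither of which is guaranteed by $\phi\in\mathcal{D}(H)\cap\mathcal{D}(A)$. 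What \emph{is} legitimate is the weak derivative, after pairing against $\varphi\in\mathcal{D}(H)\cap\mathcal{D}(A)$; combined with norm-continuity of $s\mapsto U(s)\ad_A(H)U(s)^*\psi$ this yields the Duhamel identity first weakly, then strongly by density of test vectors in $\mathcal{H}$. This is exactly the mechanism the paper uses in its $(2)\Rightarrow(3)$ step, so the fix is routine, but your phrasing suggests a direct strong differentiation that is not available.

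A second minor point: you invoke density of $\mathcal{D}(H)\cap\mathcal{D}(A)$ in $\mathcal{D}(H)$ ``by $H\in C^1(A)$'' without saying how $H\in C^1(A)$ is obtained. The paper addresses this at the outset: once the commutator form $[H,A]$ is seen to extend to an $H$-bounded operator (the first thing extracted under any of the three hypotheses), Mourre's argument together with Remark~\ref{Rem-b-preserve} gives $(H+i)^{-1}:\mathcal{D}(A)\to\mathcal{D}(A)$, hence $H\in C^1(A)$ and the density. Alternatively, density already follows from b-preservation alone, since the generator of the $C_0$-group $U^\circ$ on $\mathcal{D}(H)$ is densely defined with domain contained in $\mathcal{D}(H)\cap\mathcal{D}(A)$.
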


\begin{proof} Assume the commutator form $[H,A]$ has an extension from
$\mathcal{D}(H)\cap\mathcal{D}(A)$ to an $H$-bounded operator.
Then an argument of Mourre, \cite[Prop.II.2]{Mourre1981}, keeping Remark~\ref{Rem-b-preserve} in mind, 
implies that $(H+i)^{-1}:\mathcal{D}(A)\to\mathcal{D}(A)$.
Hence, it follows that $(H+i)^{-1}$ is of class $C^1(A)$. A consequence of this is that
$\mathcal{D}(A)\cap\mathcal{D}(H)$ is dense in $\mathcal{D}(H)$ (as well as in $\mathcal{D}(A)$).
(Alternatively use Remark~\ref{Rem-b-preserve} backwards in conjunction with Nelson's theorem, \cite[Thm. X.49]{ReedSimonII1975}.)
This remark implies that any extension of the commutator form $[H,A]$ to an $H$-bounded operator
is necessarily unique.

(1) $\Rightarrow$ (2):  A consequence of the above observation 
is that $\ad_A^j(H)$, for $j=1,\dots,k$, is symmetric for $j$ even and anti-symmetric for $j$ odd.
Compute first for $\varphi, \psi\in\mathcal{D}(H)\cap\mathcal{D}(A)$
\[
\frac{d}{dt} I(t) = -(\varphi,U(t)i[H,A]U(t)^*\psi) = -(\varphi,U(t)i\ad_A(H)U(t)^*\psi).
\]
If we evaluate at $t=0$ we observe that $H^{(1)}(0) = - i \ad_A(H)$ can be used as a weak 
derivative on $\mathcal{D}(H)\cap\mathcal{D}(A)$. Iteratively we now conclude that
\[
\frac{d^k}{dt^k} I(t) = (-1)^k (\varphi,U(t)i^k[\ad_A^{k-1}(H),A]U(t)^*\psi) = (-1)^k
(\varphi,U(t)i^k\ad_A^k(H)U(t)^*\psi).
\]
Taking $t=0$ implies (2). The computation here also establishes the formula connecting $\ad_A^j(H)$
and $H^{(j)}(0)$.

(2) $\Rightarrow$ (3): From the computation of $I$'s first derivative above, evaluated at $0$,
we observe that $[H,A]$ extends from the intersection domain to an $H$-bounded operator.
Hence this extension is unique, and indeed all the derivatives $H^{(j)}(0)$, $j=1,\dots, k$ 
are unique extensions by continuity. In particular $H^{(j)}(0)$ are symmetric operators on $\mathcal{D}(H)$ 
and, for $j=1,\dots,k$ and $\varphi, \psi\in\mathcal{D}(H)\cap\mathcal{D}(A)$,
\[
\frac{d^j}{dt^j} I(t) = (\varphi,U(t)i[A,H^{(j-1)}(0)]U(t)^*\psi) = (\varphi, U(t)H^{(j)}(0)U(t)^*\psi).
\]
That $\psi(t):=U(t)HU(t)^{*}\psi$ is itself continuous is a consequence of $U^{\circ}$ being a $C_0$-group on $\mathcal{D}(H)$. 
We assume inductively that $\psi(t)$
is $C^{k-1}([-1,1];\mathcal{H})$ and 
\[
\frac{d^{k-1}}{dt^{k-1}} \psi(t) = U(t) H^{(k-1)}(0)U(t)^*\psi.
\]
Assume now $\psi,\varphi\in\mathcal{D}(A)\cap \mathcal{D}(H)$ and compute 
\begin{align*}
& \frac1{t-s}\big((\varphi, \frac{d^{k-1}}{dt^{k-1}} \psi(t)) - (\varphi,\frac{d^{k-1}}{dt^{k-1}}\psi(s))\big)
-(\varphi,U(t)H^{(k)}(0)U(t)^*\psi) \\
& =\frac1{t-s}\int_s^t\big(\varphi,(U(r) H^{(k)}(0)U(r)^*-U(t)H^{(k)}(0)U(t)^*)\psi \big)dr.
\end{align*}
This identity now extends by continuity to $\varphi\in\mathcal{H}$ and $\psi\in\mathcal{D}(H)$.
We can furthermore estimate (for $s<t$)
\begin{align*}
& \big\| \frac1{t-s}\big(\frac{d^{k-1}}{dt^{k-1}} \psi(t)) - \frac{d^{k-1}}{dt^{k-1}}\psi(s)\big)
-U(t)H^{(k)}(0)U(t)^*\psi\big\|\\
& \leq 
\frac1{t-s} \int_s^t \big\|\big(U(r)H^{(k)}(0)U(r)^*-U(t)H^{(k)}(0)U(t)^*\big)\psi\big\| dr.
\end{align*}
That the right-hand side converges to zero when $s\to t$ (from the left) now follows from
the strong continuity of $U^{\circ}$ on $\mathcal{D}(H)$. A similar argument works for $s>t$.

(3) $\Rightarrow$ (1): Compute for $\varphi,\psi\in\mathcal{D}(H)\cap\mathcal{D}(A)$
\[
\frac{d^{j}}{d t^j} (\varphi,\psi(t))_{|t=0} = (\varphi,H^{(j)}(0)\psi). 
\]
Conversely one can compute the $j^{th}$ derivative in terms of iterated commutators,
and hence (1) follows. Note again, that the very first step in particular ensures that 
extensions are unique. 
\end{proof}

\subsection*{Examples}
\paragraph{\emph{1. N-body Schr{\"o}dinger operators.}}
Consider the operator
\begin{equation*}
	H=-\frac{1}{2}\Delta+\sum_{i<j}^{1,\dots,N}{V_{ij}(x_i-x_j)},
	\label{eq:NbodyH}
\end{equation*}with \emph{Coulomb pair potentials} $V_{ij}(x):=c_{ik}/(|x_i-x_j|)$, $c_{ik}\in\mathds{R}$, on $L^2(X)$, where
\begin{equation*}
X:=\left\{x=(x_1,\dots,x_N)\in\mathds{R}^{3N}|x_j\in\mathds{R}^3,\,\,1\leq j\leq N,\,\,\sum_{j=1}^{N}x_j=0\right\},
\end{equation*}\cite{HunzikerSigal2000b}. As a shorthand we write $x=(x_1,\dots,x_N)$. The unitary group of dilations, $U(\cdot)$ is defined by
\begin{equation*}
(U(t)\psi)(x):=e^{t\frac{3(N-1)}{2}}\psi\big(e^{t}x\big),
\end{equation*}and $U(t)=\exp(itA)$ for the generator of dilations $A$. From Proposition \ref{bdcommutatersandCk} infer for some $C>0$
\begin{equation*}
\|\ad_A^k(H)\|_{\mathfrak{B}(\mathcal{D}(p^2),\mathcal{H})}\leq C2^k.
\end{equation*}It is well known, that there is a Mourre estimate for a much more general class than the Coulomb N-body Hamiltonian, including the following example, \cite{HunzikerSigal2000b}. This enables Theorem \ref{analyticity}.\\Another example for $N$-body Schr\"{o}dinger operators to which Theorem \ref{analyticity} is applicable is defined with \emph{Yukawa pair potentials}. The pair potentials $V_{ik}$ are now given by
\begin{equation*}
V_{ij}(x):=\frac{c_{ik}e^{-\mu |x_i-x_j|}}{|x_i-x_j|},\,\,c_{ik}\in\mathds{R},\,\,\mu>0.
\end{equation*}Observe the estimate
\begin{equation*}
\left|\frac{d^k}{dt^k}\frac{e^{-t}}{r}e^{\mu r e^{t}} \right|_{\big|t=0}\leq k!a^{k},\,\,r:=|x_i-x_j|,
\end{equation*}for some $a>0$. The $r$-dependent functions on the right hand side of this inequality are infinitesimally $p^2$-bounded, which again shows the applicability of Theorem \ref{analyticity}. Hence non-threshold eigenvectors are analytic vectors with respect to $A$. This reproduces known results of \cite{BalslevCombes1971}.\\

\paragraph{\emph{2. The Spin-Boson Model.}}
The 'matter' Hamiltonian is defined as
\begin{equation*}
H_{\mathrm{at}}:=\epsilon\sigma_3,\,\,\epsilon>0,
\end{equation*}with the $2\times2$ Pauli-matrices $\sigma_1,\sigma_2,\sigma_3$. The corresponding Hilbert space is $\mathcal{H}_{\mathrm{at}}:=\mathds{C}^2$. We briefly list the definition of the quantised bosonic field, but for the details of \emph{second quantisation} we refer to \cite{DerezinskiGerard1999}. The Hilbert space of the bosonic field is the \emph{bosonic Fock space},
\begin{equation*}
\mathcal{F}_{+}:=\bigoplus_{n=0}^{\infty}\mathcal{S}_n\mathfrak{h}^{\otimes n},\,\,\mathfrak{h}:=L^2(\mathds{R}^3,d^3k),
\end{equation*}where $\mathcal{S}_n$ denotes the orthogonal projection onto the totally symmetric $n$-particle wave functions. We denote for $k\in\mathds{R}$ with $a(k)$ and $a^{\dagger}(k)$ the \emph{annihilation} and \emph{creation} operator, respectively. The energy of the free field, $H_{\mathrm{f}}$, is defined as
\begin{equation*}
H_{\mathrm{f}}=\int_{\mathds{R}^3} a^{\dagger}(k)\omega(k)a(k)d^3k,\,\,\omega(k):=\sqrt{k^2+m^2},\,\,m>0.
\end{equation*}The Hilbert space of the compound system is
\begin{equation*}
\mathcal{H}:=\mathcal{H}_{\mathrm{at}}\otimes \mathcal{F}.
\end{equation*}We define the coupling between atom and field by
\begin{equation*}
\Phi(v):=\frac{1}{\sqrt{2}}\int_{\mathds{R}^3}v(k)\{G\otimes a^{\dagger}(k)+G^{*}\otimes a(k)\}d^3k,
\end{equation*}with a complex $2\times2$ matrix $G$. The function $v$ is given by
\begin{equation*}
v(k):=\frac{e^{-\frac{k^2}{\Lambda^2}}}{\omega(k)^{\frac{1}{2}}},\,\,\forall k\in\mathds{R}^3.
\end{equation*}The constant $\Lambda>0$ plays the role of an ultraviolet cutoff. We define the Hamiltonian of the compound system, $H$, as
\begin{equation*}
H:=H_{\mathrm{at}}\otimes\mathbb{1}+\mathbb{1}\otimes H_{\mathrm{f}}+\Phi(v).
\end{equation*}Define,
\begin{equation*}
\alpha:=\frac{i}{2}\left(\nabla_{k}\cdot k+k\cdot\nabla_{k}\right).
\end{equation*}This operator is symmetric and densely defined on $L^2(\mathds{R}^{3})$ as it is the well known generator of the strongly continuous unitary group
\begin{equation*}
\left(u(t)\psi\right)(k):=e^{-\frac{3}{2}t}\psi\left(e^{-t} k\right).
\end{equation*}We denote the second-quantised operators of $\alpha$ and $u(t)$ by $A:=d\Gamma(\alpha)$ and $U(t):=\Gamma(u(t))$, respectively. $A$ is the generator of the strongly continuous unitary group $U(t)$. Observe that
\begin{equation*}
i^{\ell}\ad_A^{\ell}(H)=d\Gamma(i^{\ell}\ad_{\alpha}^{\ell}(\omega))+(-1)^{\ell+1}\Phi\left((i\alpha)^{\ell}v\right)
\end{equation*}and
\begin{equation}
\|\Phi\left((i\alpha)^{\ell}v\right)(H_{\mathrm{f}}+\mathbb{1})^{-\frac{1}{2}}\|\leq\|\omega^{-\frac{1}{2}}(i\alpha)^{\ell}v\|_{L^2}.
\label{eq:Phiestimate}
\end{equation}Since $(i\alpha)^{\ell}v=\frac{d^{\ell}}{dt^{\ell}}\left(e^{i\alpha t}v\right)\big|_{t=0}$, we have to estimate the multiple derivatives. Consider the map
\begin{equation*}
\overline{B\left(0,\frac{\pi}{4}\right)}\ni z\mapsto \left(k^2e^{-2z}+m^2\right)^{\frac{1}{2}}=\omega\left(e^{-z}k\right),\,\,k\in\mathds{R}^{3},
\end{equation*}where $\overline{B\left(0,\frac{\pi}{4}\right)}$ denotes the closed ball of radius $\pi/4$, centered at $0$. Observe, that
\begin{equation}
\frac{m}{\sqrt{2}}\leq|\omega\left(e^{-z}k\right)|\leq e^{\frac{\pi}{4}}\omega(k)
\label{eq:holomorphicomega}
\end{equation}where the lower bound implies that $z\mapsto \omega\left(e^{-z}k\right)^{-\frac{1}{2}}$ is holomorphic in $B\left(0,\frac{\pi}{4}\right)$, for all $k\in\mathds{R}^3$. The upper bound ensures that $\mathcal{D}(\mathbb{1}\otimes H_{\mathrm{f}})$ is b-stable with respect to $U(\cdot)$. Below, we will also show that $\ad_A(H)\in\mathfrak{B}(\mathcal{D}(H),\mathcal{H})$, which implies by Proposition \ref{bdcommutatersandCk} that $H\in C^1(A)$. Analogously we define the holomorphic map
\begin{equation*}
\overline{B\left(0,\frac{\pi}{4}\right)}\ni z\mapsto \frac{e^{-e^{-z}\frac{k^2}{\Lambda^2}}}{\omega(e^{-z}k)^{\frac{1}{2}}}=v\left(e^{-z}k\right),\,\,k\in\mathds{R}^{3}.
\end{equation*}We may compute by Cauchy's formula,
\begin{equation*}
\frac{d^{\ell}}{dz^{\ell}}\left(v(e^{-z}k)e^{-\frac{3}{2}z}\right)_{\big|_{z=0}}=\frac{\ell!\left(\frac{\pi}{4}\right)^{-\ell}}{2\pi}\intop_0^{2\pi}e^{-\frac{3}{2}\gamma(\varphi)}v\left(e^{-\gamma(\varphi)}k\right)e^{-i\ell\varphi}d\varphi,
\end{equation*}$\gamma(\varphi):=(\pi/4)e^{i\varphi}$, $\varphi\in[0,2\pi)$. Using the estimate
\begin{equation*}
\left|\frac{d^{\ell}}{dz^{\ell}}(v(e^{-z}k)e^{-\frac{3}{2}z})_{\big|_{z=0}}\right|\leq\left(\frac{m}{\sqrt{2}}\right)^{-\frac{1}{2}}e^{\frac{3\pi}{8}}e^{-e^{-\frac{\pi}{2}}\frac{k^2}{\Lambda^2}}\ell!\left(\frac{\pi}{4}\right)^{-\ell},\,\,\forall k\in\mathds{R}^{3},
\end{equation*}one finds together with (\ref{eq:Phiestimate})
\begin{equation*}
\|\Phi\left((i\alpha)^{\ell}v\right)(H_f+\mathbb{1})^{-\frac{1}{2}}\|\leq\ell!R^{-\ell},
\end{equation*}for some $R>0$. Analogously, we get from (\ref{eq:holomorphicomega})
\begin{equation*}
\left|\frac{d^{\ell}}{dz^{\ell}}(\omega(e^{-z}k))_{\big|_{z=0}}\right|\leq\ell!\left(\frac{\pi}{4}\right)^{-\ell}e^{\frac{\pi}{4}}\omega(k),
\end{equation*}so that
\begin{equation*}
\left\|d\Gamma(i^{\ell}\ad_{\alpha}^{\ell}(\omega))(H_{\mathrm{f}}+\mathbb{1})^{-1}\right\|\leq\|i^{\ell}\ad_{\alpha}^{\ell}(\omega)\omega^{-1}\|_{\infty}\leq\ell!c^{-\ell},
\end{equation*}for some $c>0$. From \cite{DerezinskiGerard1999} we may infer a Mourre estimate for our model. Derezi\'nski and G\'erard use a different generator of dilations, namely
\begin{equation*}
\alpha_{\omega}:=\frac{i}{2}\big((\nabla_k\omega)(k)\cdot \nabla_k+\nabla_k\cdot (\nabla_k\omega)(k)\big).
\end{equation*}It is also possible to prove a Mourre estimate using their techniques if $\omega(k)$ is radially increasing, $\omega(k)>0$, $\forall k\in\mathds{R}^{3}$ and $0$ is the only critical point of $\omega$. Thus, we conclude by Theorem \ref{analyticity} and Proposition \ref{bdcommutatersandCk} that any eigenstate pertaining to an embedded non-threshold eigenvalue is an analytic vector with respect to $A$.

\section{Preliminaries}
In what follows, we need some regularisation techniques from operator theory. It is convenient to perform calculations involving multiple commutators by using the so-called \emph{Helffer-Sj\"ostrand functional calculus}. Part and parcel of this calculus are certain extensions of a subclass of the smooth functions on $\mathds{R}$, the \emph{almost analytic extensions}. The following proposition allows us to define such extensions.
\begin{prop}
\label{almostana}
Consider a family of continuous functions $(f_n)_{n\in\mathds{N}}\subset C^{\infty}(\mathds{R})$, for which there is an $m\in\mathds{R}$, such that $\langle x\rangle^{k-m}f_n^{(k)}$ is uniformly bounded for all $n\geq0$. There exists a family of functions $(\tilde{f}_n)_{n\in\mathds{N}}$, such that
\begin{enumerate}
	\item $\supp(\tilde{f}_n)\subset\{z\in\mathds{C}|\Re z\in\supp(f_n)\,\text{   and   }\,|\Im z|\leq\langle\Re z\rangle\}$.
	\item $|\bar{\partial}\tilde{f}_n(z)|\leq C_N\langle z\rangle^{m-N-1}|\Im z|^N$ for all $N\geq 0$.
\end{enumerate}The constant $C_N$ does not depend on $n$.
\end{prop}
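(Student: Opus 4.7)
My plan is to follow the classical Mather--H\"ormander construction of almost analytic extensions, adapted to uniform symbol-type bounds. Set $M_k:=\sup_n\|\langle\cdot\rangle^{k-m}f_n^{(k)}\|_\infty<\infty$ and fix cutoffs $\chi,\tau\in C_c^\infty(\mathds{R};[0,1])$ with $\chi\equiv 1$ on $[-1/2,1/2]$, $\supp\chi\subseteq[-1,1]$, $\tau\equiv 1$ on $[-1,1]$, $\supp\tau\subseteq[-2,2]$. Define
\begin{equation*}
\tilde f_n(x+iy):=\chi\!\left(\frac{y}{\langle x\rangle}\right)\sum_{k=0}^{\infty}\frac{f_n^{(k)}(x)}{k!}(iy)^k\,\tau\!\left(\frac{\lambda_k\,y}{\langle x\rangle}\right),
\end{equation*}
where the sequence $1=\lambda_0\leq\lambda_1\leq\cdots\nearrow\infty$ is chosen so that $M_k(2/\lambda_k)^k/k!\leq 2^{-k}$ for all $k\geq 1$; this is possible because the $M_k$ are fixed finite numbers by hypothesis. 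The support statement (1) is then forced by the outer cutoff together with $\supp f_n^{(k)}\subseteq\supp f_n$.

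The analytic heart of the argument is the pointwise telescoping identity
\begin{equation*}
\bar\partial\!\left(\sum_{k=0}^{N}\frac{f^{(k)}(x)(iy)^k}{k!}\right)=\frac{f^{(N+1)}(x)(iy)^N}{2\,N!},
\end{equation*}
obtained by writing $\bar\partial=\tfrac12(\partial_x+i\partial_y)$ and noting that $\partial_x$ applied to the $k$-th summand yields the $k$-th term of the next generation while $i\partial_y$ yields the negative of the $(k-1)$-th such term, so the two sums cancel except for the boundary. In the truncated series defining $\tilde f_n$, additional terms appear whenever $\bar\partial$ lands on a cutoff, but these are supported in annuli where $|y|\sim\langle x\rangle/\lambda_k$ (or $|y|\sim\langle x\rangle$ for the outer cutoff), so on such sets $|\Im z|^{-N}\langle\Re z\rangle^N$ is bounded by $\lambda_k^N$, which is harmless thanks to the summability built into the choice of $\lambda_k$.

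To prove (2) for a given $N$, split the series at $k=N+1$. Applying the telescoping identity to the head $0\leq k\leq N$ reduces its $\bar\partial$ to the single term $f_n^{(N+1)}(x)(iy)^N/(2N!)$, bounded by $(M_{N+1}/(2N!))\langle x\rangle^{m-N-1}|y|^N$, together with cutoff remainders that are of arbitrarily high order in $|\Im z|/\langle\Re z\rangle$ on their supports. The tail $k\geq N+1$ is estimated directly: on the support of the $k$-th summand one has $|y|\leq 2\langle x\rangle/\lambda_k$, so each term (and each of its derivatives) is bounded by $2^{-k}\langle x\rangle^m$ times fixed combinatorial factors; extracting $N$ powers of $|y|$ at the cost of $(\lambda_k/\langle x\rangle)^N$ and summing gives a contribution of the required form $O(\langle z\rangle^{m-N-1}|\Im z|^N)$. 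The resulting constant $C_N$ depends only on $m$, $N$, $(M_j)_{j\leq N+1}$ and the fixed $\chi,\tau$, hence is independent of $n$. The main technical obstacle, which is purely bookkeeping, lies in organising the argument so that the \emph{single} function $\tilde f_n$ satisfies estimate (2) for \emph{every} $N$ simultaneously; this forces $(\lambda_k)$ to be chosen in advance to dominate all the $M_k$ rather than adapted to a single $N$, and uniformity in $n$ is then automatic because the entire construction depends on $n$ only through the $n$-independent semi-norms $M_k$.
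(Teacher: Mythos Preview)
The paper does not actually prove this proposition; it defers to the literature with the sentence ``For a proof of this statement see \cite{Moeller2000a}.'' Your sketch is the standard Mather--H\"ormander construction that one finds in that reference (and in Dimassi--Sj\"ostrand, Davies, etc.), so in substance you are supplying the intended argument rather than an alternative one.

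Two bookkeeping points deserve tightening. First, your defining inequality for $\lambda_k$ controls $M_k$ but not $M_{k+1}$, yet $M_{k+1}$ appears unavoidably when $\bar\partial$ produces the $\partial_x$-derivative $f_n^{(k+1)}$ of the $k$-th summand (equivalently, in the telescoped boundary term $c_k(\tau_k-\tau_{k+1})$ with $c_k=f_n^{(k+1)}(x)(iy)^k/k!$). The fix is cosmetic: replace $M_k$ by $\max(M_k,M_{k+1})$ in the inequality defining $\lambda_k$. Second, your description of the tail estimate is garbled: on the support of the $k$-th summand one has $|y|\le 2\langle x\rangle/\lambda_k$, an \emph{upper} bound, so one cannot ``extract $N$ powers of $|y|$ at the cost of $(\lambda_k/\langle x\rangle)^N$''. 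The correct manoeuvre for $k>N$ is to keep $|y|^N$ out of the already-present factor $|y|^k$ and absorb the remaining $|y|^{k-N}$ via that upper bound, producing $(2/\lambda_k)^{k-N}\langle x\rangle^{k-N}$; this gives a summable series once $\lambda_k$ is chosen as above. Similarly, the head cutoff remainders ($k\le N$) are not ``of arbitrarily high order'': they are supported where $|y|\ge\langle x\rangle/\lambda_{k+1}$, and one trades the missing $|y|^{N-k}$ for a finite factor $\lambda_{k+1}^{N-k}$, harmless since only finitely many $k$ are involved. With these adjustments the argument is complete, and uniformity in $n$ is indeed automatic because the construction depends on $n$ only through the $M_k$.
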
For a proof of this statement see \cite{Moeller2000a}.
\begin{remark}
We will call these extensions for \emph{almost analytic extensions}, because $\bar{\partial}\tilde{f}_n$ vanishes approaching the real axis.
\end{remark}Let $\varepsilon>0$. For any self-adjoint operator $L$ and any $f\in C^{\infty}(\mathds{R})$ with
\begin{equation}
\sup_{t\in\mathds{R}}|f^{(k)}(t)\langle t\rangle^{k+\varepsilon}|
\end{equation}we may define a bounded operator $f(L)$, by
\begin{equation}
f(L):=\frac{1}{2\pi i}\int_{\mathds{C}}\bar{\partial}\tilde{f}(z)(z-L)^{-1}dz\wedge d\bar{z}.
\label{eq:HelfferSjoestrand}
\end{equation}The integral on the right hand side converges in operator norm. It is well known, that this definition coincides with the operator defined by functional calculus. Concerning the class $\mathscr{B}$ however, we cannot directly apply this definition. Inspired by a construction in \cite{MoellerSkibsted2004} we consider the following instead.

\begin{lemma}
Let $r\in\mathscr{B}$. There is an almost analytic extension of $t\mapsto r(t)/t=:\rho(t)$, which satisfies due to Proposition \ref{almostana} the bounds
\begin{equation}
|\bar{\partial}\tilde{\rho}(z)|\leq C_N\langle z\rangle^{-N-2}|\Im(z)|^N.
\label{eq:rhotilde1}
\end{equation}
\end{lemma}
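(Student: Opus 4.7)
The plan is to apply Proposition \ref{almostana} to the single function $\rho(t) = r(t)/t$, with decay index $m = -1$. Indeed, the desired conclusion $|\bar\partial \tilde\rho(z)| \leq C_N \langle z\rangle^{-N-2}|\Im z|^N$ matches the proposition's output $\langle z\rangle^{m-N-1}|\Im z|^N$ precisely when $m=-1$. So the task reduces to two items: check that $\rho \in C^\infty(\mathds{R})$, and verify that $\sup_{t\in\mathds{R}} \langle t\rangle^{k+1}|\rho^{(k)}(t)| < \infty$ for every $k \in\mathds{N}$.

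For smoothness I would invoke Hadamard's lemma: since $r(0)=0$, write $r(t) = t\int_0^1 r'(st)\,ds$, so that $\rho(t) = \int_0^1 r'(st)\,ds$ extends to a $C^\infty$ function on all of $\mathds{R}$ with $\rho(0)=r'(0)=1$. Differentiation under the integral also yields $\rho^{(k)}(t) = \int_0^1 s^k r^{(k+1)}(st)\,ds$, which is the convenient representation to use near $t=0$.

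The weighted derivative bound then splits into two regimes. For $|t|\leq 1$ the integral representation gives $|\rho^{(k)}(t)| \leq \sup_{|u|\leq 1}|r^{(k+1)}(u)|$, which is finite by the $\mathscr{B}$ hypothesis, and $\langle t\rangle^{k+1}$ is bounded on that compact set. For $|t|\geq 1$ I would apply Leibniz to $\rho = r\cdot(1/t)$, obtaining
\[
\rho^{(k)}(t) = \sum_{j=0}^{k}\binom{k}{j}\, r^{(j)}(t)\,\frac{(-1)^{k-j}(k-j)!}{t^{k-j+1}}.
\]
The $\mathscr{B}$-bound $|r^{(j)}(t)| \leq C_j \langle t\rangle^{-j}$ together with $|t|^{-1} \leq \sqrt{2}\,\langle t\rangle^{-1}$ for $|t|\geq 1$ controls each summand by a constant times $\langle t\rangle^{-(k+1)}$, so summing the finitely many terms yields $|\rho^{(k)}(t)| \leq C_k' \langle t\rangle^{-(k+1)}$. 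This is exactly the $m=-1$ decay required as input to Proposition \ref{almostana}, which then delivers an almost analytic extension $\tilde\rho$ satisfying \eqref{eq:rhotilde1}.

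I do not anticipate any genuine obstacle. The only small piece of care is in handling the two regimes $|t|\leq 1$ and $|t|\geq 1$ consistently: the normalisations $r(0)=0$ and $r'(0)=1$ are precisely what prevent the apparent $1/t$ factor from producing a singularity at the origin, while the uniform decay of the derivatives $r^{(j)}$ built into the definition of $\mathscr{B}$ supplies the decay of $\rho^{(k)}$ at infinity with one extra power of $\langle t\rangle^{-1}$ coming from the $1/t$ factor.
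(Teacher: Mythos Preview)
Your proof is correct and follows the same overall strategy as the paper: verify that $\rho$ satisfies the hypothesis of Proposition~\ref{almostana} with $m=-1$ by splitting into the regimes $|t|\leq 1$ and $|t|\geq 1$. The execution differs in two minor but noteworthy respects. Near the origin the paper simply invokes the real-analyticity clause in the definition of $\mathscr{B}$ to conclude that $\rho$ is smooth with bounded derivatives on $[-1,1]$, whereas your Hadamard representation $\rho(t)=\int_0^1 r'(st)\,ds$ uses only $r\in C^\infty$ and $r(0)=0$; your argument is thus more elementary and shows that the analyticity hypothesis is not actually needed for this lemma. Away from the origin the paper writes Leibniz as $r^{(k)}=\rho^{(k)}t+k\rho^{(k-1)}$ and argues by induction on $k$, while you expand $\rho=r\cdot t^{-1}$ directly into a finite sum; both routes yield the same $\langle t\rangle^{-(k+1)}$ bound with comparable effort.
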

\begin{proof}
Since $r$ is real analytic around $0$ we observe
\begin{equation*}
\sup_{|t|\leq 1}\big|\rho^{(k)}(t)\langle t\rangle^{k+1}\big|<\infty.
\end{equation*}On the other hand, the Leibniz rule yields $r^{(k)}(t)=\rho^{(k)}(t)t+k\rho^{(k-1)}(t)$ and thus by induction
\begin{equation*}
\sup_{|t|\geq1}\left|\rho^{(k)}(t)\langle t\rangle^{k+1}\right|<\infty.
\end{equation*}
\end{proof}For any $r\in\mathscr{B}$, set $r_n(t):=nr(t/n)$, $\rho(t):=r(t)/t$, $\forall t\in\mathds{R}$ and define $r_n(A)$ by functional calculus. If we require $\overline{\tilde{\rho}(z)}=\tilde{\rho}(\bar{z})$ the well known formula
\begin{equation}
r_n(t)=\frac{1}{2\pi i}\int_{\mathds{C}}\bar{\partial}\tilde{\rho}(z)\frac{t}{z-\frac{t}{n}}dz\wedge d\bar{z}
\label{eq:r(t)rep}
\end{equation}may be recovered. Observe, that
\begin{equation}
\frac{t}{z-\frac{t}{n}}=-n\left(1-\frac{z}{z-\frac{t}{n}}\right).
\label{eq:r(t)rep2}
\end{equation}The first term on the right hand side is constant and vanishes when computing commutators. Although we cannot use the formula (\ref{eq:r(t)rep}) directly as a representation of $r_n(A)$ on $\mathcal{H}$, it is possible to use it on the domain of $A$; a fact which is useful in the next lemma.
\begin{lemma}
Let $B\in C^{1}(A)$, where $B\in\mathfrak{B}(\mathcal{H})$. For any $r\in\mathscr{B}$ we have
\begin{equation}
[B,r_n(A)]=r_n'(A)\ad_A(B)+R(r_n,B),
\label{eq:uniformtaylor}
\end{equation}with
\begin{equation}
R(r_n,B):=\frac{1}{n2\pi i}\int_{\mathds{C}}\bar{\partial}\tilde{\rho}(z)zJ_n^2(z)[\ad_A(B),A]J_n(z)dz\wedge d\bar{z},
\label{eq:uniformtaylor2}
\end{equation}where $J_n(z):=n(nz-A)^{-1}$ and the integral being norm convergent. Moreover, there is a $c>0$
\begin{equation}
\slim_{n\to\infty}R(r_n)=0,\text{ $\mathrm{and}$  }\|R(r_n,B)\|\leq c\|\ad_A(B)\|.
\label{eq:uniformtaylors0}
\end{equation}If $B\in C^2(A)$, we have for any $n\in\mathds{N}$ and some $\alpha,\beta>0$
\begin{equation}
\|AR(r_n,B)\|\leq \alpha\|\ad_A^2(B)\|,\,\|R(r_n,B)\|\leq\frac{\beta}{n}\|\ad_A^2(B)\|.
\label{eq:uniformtayor3}
\end{equation}In addition,
\begin{equation}
\slim_{n\to\infty}AR(r_n,B)=0.
\label{eq:uniformtaylors}
\end{equation}
\label{uniformtaylor}
\end{lemma}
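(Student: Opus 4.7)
The plan is to obtain the identity (\ref{eq:uniformtaylor})--(\ref{eq:uniformtaylor2}) directly from the Helffer--Sj\"ostrand representation of $\rho(A/n)$, and to read off the two sets of bounds and the two strong limits from, respectively, an unrearranged version of the integral and the fully rearranged form (\ref{eq:uniformtaylor2}). First I would use $r_n(t)=t\rho(t/n)$ together with (\ref{eq:r(t)rep})--(\ref{eq:r(t)rep2}) to write
\[
r_n(A) = \frac{1}{2\pi i}\int_{\mathds{C}} \bar{\partial}\tilde{\rho}(z)\, nz J_n(z)\, dz\wedge d\bar z + c_n I,
\]
where $c_n I$ absorbs the $-n$ piece of (\ref{eq:r(t)rep2}); the integral converges in operator norm thanks to (\ref{eq:rhotilde1}) and the uniform bound $\|J_n(z)\|\leq|\Im z|^{-1}$. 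Since $c_n I$ commutes with $B$, taking the commutator and inserting the resolvent identity $[B,J_n(z)]=n^{-1}J_n(z)\ad_A(B)J_n(z)$ (valid for $B\in C^1(A)$) produces the intermediate formula
\[
[B, r_n(A)] = \frac{1}{2\pi i}\int_{\mathds{C}}\bar{\partial}\tilde{\rho}(z)\, z J_n(z)\ad_A(B) J_n(z)\, dz\wedge d\bar z.
\]

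Assuming also $B\in C^2(A)$, I would apply the resolvent identity a second time, namely $[\ad_A(B),J_n(z)]=n^{-1}J_n(z)\ad_A^2(B)J_n(z)$, obtaining the algebraic rearrangement
\[
J_n(z)\ad_A(B) J_n(z) = J_n(z)^2\ad_A(B) + \frac{1}{n}J_n(z)^2\ad_A^2(B) J_n(z).
\]
Substituting this into the preceding display splits the right-hand side into $r_n'(A)\ad_A(B)$ plus exactly the remainder $R(r_n,B)$ defined by (\ref{eq:uniformtaylor2}); the first piece is identified via the Helffer--Sj\"ostrand representation $r_n'(A)=(2\pi i)^{-1}\int \bar{\partial}\tilde{\rho}(z)\,zJ_n(z)^2\,dz\wedge d\bar z$, obtained by differentiating (\ref{eq:r(t)rep}) in $t$ and simplifying by means of $t(z-t/n)^{-2}+(z-t/n)^{-1}=z(z-t/n)^{-2}$.

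For the norm estimates I would combine $|\bar{\partial}\tilde{\rho}(z)|\leq C_N\langle z\rangle^{-N-2}|\Im z|^N$ with $\|J_n(z)\|\leq|\Im z|^{-1}$ inside each integral. The bound $\|R(r_n,B)\|\leq c\|\ad_A(B)\|$ then follows from the intermediate formula together with $\|r_n'(A)\|\leq\|r'\|_\infty$ and the triangle inequality. The sharper bound $\|R(r_n,B)\|\leq \beta n^{-1}\|\ad_A^2(B)\|$ is read off directly from (\ref{eq:uniformtaylor2}) after choosing $N$ large enough. Finally, $\|AR(r_n,B)\|\leq \alpha\|\ad_A^2(B)\|$ is obtained by commuting $A$ into the integral via $AJ_n(z)^2=nzJ_n(z)^2-nJ_n(z)$, which cancels the explicit $n^{-1}$ prefactor while still leaving an integrable integrand.

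For the strong limits, uniform boundedness together with density of $\mathcal{D}(A)$ reduces $\slim R(r_n,B)=0$ to the verifications $r_n(A)\phi\to A\phi$ and $r_n'(A)\phi\to\phi$ for $\phi\in\mathcal{D}(A)$; both follow from dominated convergence in the spectral calculus, using $|r_n(t)-t|\leq C\langle t\rangle^2/n$ locally and at-worst linear global growth in $t$. The convergence $\slim AR(r_n,B)=0$ is the main technical step, and I would handle it by dominated convergence inside the integral applied to the pointwise identity $n^{-1}AJ_n(z)^2\phi=zJ_n(z)^2\phi-J_n(z)\phi$, whose right-hand side tends to zero strongly for each fixed $z$ with $\Im z\neq 0$, since $J_n(z)\phi\to z^{-1}\phi$. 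The main obstacle is selecting $N$ in (\ref{eq:rhotilde1}) so that the resulting dominating function is integrable and $n$-uniform for this last argument.
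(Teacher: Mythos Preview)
Your overall route---Helffer--Sj\"ostrand representation of $nzJ_n(z)$, one commutation to reach the intermediate formula, a second commutation to isolate $r_n'(A)\ad_A(B)$, then pointwise-in-$z$ resolvent bounds plus dominated convergence---is precisely the paper's. There is, however, one genuine gap.

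The lemma asserts the decomposition \eqref{eq:uniformtaylor} with the integral form \eqref{eq:uniformtaylor2} of the remainder (and its norm convergence) under the hypothesis $B\in C^1(A)$ \emph{only}. You invoke $[\ad_A(B),J_n(z)]=n^{-1}J_n(z)\ad_A^2(B)J_n(z)$, which needs $B\in C^2(A)$. The paper avoids this by never forming $\ad_A^2(B)$ at this stage: it keeps the expression as the \emph{form} $[\ad_A(B),A]$ sandwiched between resolvents. Concretely, since $J_n(z)$ commutes with $A$, one has for all $\phi,\psi\in\mathcal H$
\[
\big(\psi,\,J_n(z)^2[\ad_A(B),A]J_n(z)\phi\big)
=\big(\psi,\,J_n(z)\ad_A(B)\,AJ_n(z)\phi\big)-\big(\psi,\,AJ_n(z)\,\ad_A(B)J_n(z)\phi\big),
\]
and the right-hand side is a bounded form because $AJ_n(z)$ and $\ad_A(B)$ are bounded. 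This gives $\|J_n(z)^2[\ad_A(B),A]J_n(z)\|\leq 2\|\ad_A(B)\|\,\|J_n(z)\|\,\|AJ_n(z)\|$, which combined with $\|AJ_n(z)\|\leq n(1+|z|\,|\Im z|^{-1})$ and the $n^{-1}$ prefactor yields norm convergence of \eqref{eq:uniformtaylor2} and the bound $\|R(r_n,B)\|\leq c\|\ad_A(B)\|$ directly, all under $C^1(A)$. Your argument for \eqref{eq:uniformtaylors0} via the triangle inequality applied to $[B,r_n(A)]-r_n'(A)\ad_A(B)$ is a legitimate alternative to the paper's dominated-convergence-inside-the-integral, but note it tacitly uses that $B$ preserves $\mathcal D(A)$ (so that $r_n(A)B\phi\to AB\phi$ on $\mathcal D(A)$); this is indeed a consequence of $B\in C^1(A)$, but should be stated. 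Once you correct the $C^1$ step, the $C^2$ portion and your treatment of $AR(r_n,B)$ go through as written.
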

\begin{proof}
Let first $B\in C^1(A)$. If we consider $[r_n(A),B]$ as a form on $D(A)\times D(A)$, the commutator may be represented using (\ref{eq:r(t)rep}) with $t$ replaced by $A$, more precisely for all $\psi,\phi\in D(A)$
\begin{equation*}
(\phi,[B,r_n(A)]\psi)=\frac{1}{2\pi i}\int_{\mathds{C}}\bar{\partial}\tilde{\rho}(z)\big\{(A\phi,J_n(z)B\psi)-(\phi,BJ_n(z)A\psi)\big\}dz\wedge d\bar{z}.
\end{equation*}Observe, that the sum in the integrand is by definition
\begin{equation*}
(A\phi,J_n(z)B\psi)-(\phi,BJ_n(z)A\psi)=(\phi,[AJ_n(z),B]\psi).
\end{equation*}But since $B\in C^1(A)$, we obtain using (\ref{eq:r(t)rep2})
\begin{eqnarray*}
(\phi,[AJ_n(z),B]\psi)&=&(\phi,[nzJ_n(z)]\psi)\\
&=&((\phi,zJ_n(z)\ad_A(B)J_n(z)\psi)\\
&=&(\phi,zJ_n^2(z)\ad_A(B)\psi)+(\phi,zJ_n^2(z)[\ad_A(B),A]J_n(z)\psi).
\end{eqnarray*}There is an almost analytic extension $\tilde{\rho}(z)$ such that
\begin{equation}
|\bar{\partial}\tilde{\rho}(z)|\frac{|y|+|x|}{|y|^2}\leq C_N|y|^{N-2}\langle z\rangle^{-N-2},
\label{eq:uniformtaylorproof}
\end{equation}with $z=x+iy$, $x,y\in\mathds{R}$. Choose $N=2$ and observe that the integral
\begin{equation*}
\frac{1}{2\pi i}\int_{\mathds{C}}\bar{\partial}\tilde{\rho}(z)zJ_n^2(z)dz\wedge d\bar{z}
\end{equation*}converges in norm. Moreover,
\begin{equation*}
|\bar{\partial}\tilde{\rho}(z)|\frac{|z|}{|y|^3}(|y|+|x|)\leq C_3\langle z\rangle^{-3}.
\end{equation*}Thus from $r'(t)=\rho(t)+\rho'(t)t$ we may infer that this integral equals $r'_n(A)$. Estimate (\ref{eq:uniformtaylorproof}) shows that the integral (\ref{eq:uniformtaylor2}) converges in norm. Since
\begin{equation}
\slim_{n\to\infty}\frac{A}{n}J_n(z)=0,
\label{eq:uniformtaylorproof2}
\end{equation}the Theorem of Dominated Convergence implies (\ref{eq:uniformtaylors0}).\\Let now $B\in C^2(A)$. Choose in (\ref{eq:rhotilde1}) $N=3$, replace in (\ref{eq:uniformtaylor2}) $[\ad_A(B),A]$ with $\ad_A^2(B)$ and observe that the integrand of $AR(g_n,h)(B)$ is point-wise bounded by a constant times $\langle z\rangle^{-3}$. The term $R(g_n,h)(B)$ is point wise bounded by a constant times $\langle z\rangle^{-4}$. Both functions are in $L^1(\mathds{R}^2)$ and hence the bounds follow. Equation (\ref{eq:uniformtaylors}) is a consequence of (\ref{eq:uniformtaylor2}), (\ref{eq:uniformtaylorproof2}) and an application of the Theorem of Dominated Convergence.
\end{proof}

\begin{lemma}
\label{multcomm}
Let $r\in\mathscr{B}$ and $k\in\mathds{N}$. If $B\in C^{k}(A)$, then
\begin{equation*}
\slim_{n\to\infty}\ad_{r_n}^k(B)=\ad_{A}^k(B).
\end{equation*}
\end{lemma}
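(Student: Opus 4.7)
The plan is induction on $k$, leveraging Lemma \ref{uniformtaylor} at each step. For the base case $k=1$, the lemma gives
\begin{equation*}
\ad_{r_n}(B) - \ad_A(B) = (r_n'(A) - I)\ad_A(B) + R(r_n, B).
\end{equation*}
Since $r_n'(t) = r'(t/n)$ is uniformly bounded by $\|r'\|_\infty$ and converges pointwise to $r'(0) = 1$, functional calculus combined with dominated convergence yields $r_n'(A) \to I$ strongly, while Lemma \ref{uniformtaylor} provides $\slim_{n\to\infty} R(r_n, B) = 0$; this settles $k=1$.

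For the inductive step, suppose the claim is proved up to $k$ and take $B \in C^{k+1}(A)$; set $D_n := \ad_{r_n}^k(B)$ and $C := \ad_A^k(B)$. The Jacobi identity together with $[A, r_n(A)] = 0$ shows $\ad_A \circ \ad_{r_n} = \ad_{r_n} \circ \ad_A$ on bounded operators, so $\ad_A(D_n) = \ad_{r_n}^k(\ad_A(B))$. Applying the inductive hypothesis separately to $B \in C^k(A)$ and to $\ad_A(B) \in C^k(A)$ gives $D_n \to C$ and $\ad_A(D_n) \to \ad_A^{k+1}(B) = \ad_A(C)$ strongly. Iterating the norm bound $\|\ad_{r_n}(X)\| \leq (\|r'\|_\infty + c)\|\ad_A(X)\|$ extracted from Lemma \ref{uniformtaylor} produces uniform bounds on $\|\ad_{r_n}^j(B)\|$ for $j \leq k+1$, and in particular shows $D_n, C \in C^1(A)$.

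The heart of the proof is to establish $\ad_{r_n}(D_n)\psi \to \ad_A(C)\psi$ for $\psi \in \mathcal{D}(A)$. Split $\ad_{r_n}(D_n)\psi = D_n r_n(A)\psi - r_n(A) D_n \psi$. Since $r(0) = 0$, the mean value theorem gives $|r_n(t)| = n|r(t/n)| \leq \|r'\|_\infty |t|$, hence $\|r_n(A)\phi\| \leq \|r'\|_\infty \|A\phi\|$ for $\phi \in \mathcal{D}(A)$; combined with pointwise convergence $r_n(t) \to t$, dominated convergence in functional calculus yields $r_n(A)\phi \to A\phi$ in norm on $\mathcal{D}(A)$. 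The first term therefore tends to $CA\psi$ by uniform boundedness and strong convergence of $D_n$. For the second term, $D_n \psi, C\psi \in \mathcal{D}(A)$ because $D_n, C \in C^1(A)$; using $AX = XA - \ad_A(X)$ for $X \in C^1(A)$ we estimate
\begin{equation*}
\|r_n(A)(D_n - C)\psi\| \leq \|r'\|_\infty \bigl(\|(D_n - C)A\psi\| + \|(\ad_A(D_n) - \ad_A(C))\psi\|\bigr) \to 0,
\end{equation*}
while $r_n(A) C\psi \to AC\psi$. Hence $r_n(A) D_n \psi \to AC\psi$, and $\ad_{r_n}(D_n)\psi \to (CA - AC)\psi = \ad_A(C)\psi = \ad_A^{k+1}(B)\psi$.

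Finally, uniform boundedness of $\ad_{r_n}^{k+1}(B)$ combined with density of $\mathcal{D}(A)$ in $\mathcal{H}$ extends the strong convergence from $\mathcal{D}(A)$ to all of $\mathcal{H}$, closing the induction. The main obstacle is that $\|r_n(A)\|$ grows like $n$, so a naive operator-norm estimate on $r_n(A)(D_n - C)\psi$ fails; the decisive input is the improved bound $\|r_n(A)\phi\| \leq \|r'\|_\infty\|A\phi\|$ on $\mathcal{D}(A)$ coming from $r(0) = 0$, which together with the two strong convergences supplied by the inductive hypothesis (for $D_n$ and $\ad_A(D_n)$) is exactly what handles the varying commutator partner.
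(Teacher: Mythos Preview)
Your argument is correct, and it takes a genuinely different route from the paper's proof. Both proceed by induction on $k$ and both ultimately feed the inductive hypothesis the operator $\ad_A(B)\in C^{k-1}(A)$, but the mechanism for absorbing the extra commutator is different.

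The paper expands at the \emph{innermost} level: it writes
\[
\ad_{r_n}^k(B)=\ad_{r_n}^{k-1}(\ad_{r_n}(B))=r_n'(A)\,\ad_{r_n}^{k-1}(\ad_A(B))+\ad_{r_n}^{k-1}(R(r_n,B)),
\]
applies the induction hypothesis to the first term, and then disposes of the second by pushing $\ad_{r_n}^{k-1}$ through the Helffer--Sj\"ostrand integral for $R(r_n,B)$ (the resolvents $J_n(z)$ commute with $r_n(A)$), reducing everything to $\frac{A}{n}J_n(z)\to 0$ strongly and dominated convergence. Your approach instead attacks the \emph{outermost} commutator directly on $\mathcal{D}(A)$: the decisive estimate $\|r_n(A)\phi\|\leq \|r'\|_\infty\|A\phi\|$ for $\phi\in\mathcal{D}(A)$, together with the commutation $\ad_A\circ\ad_{r_n}=\ad_{r_n}\circ\ad_A$ (which you justify via Jacobi and $[A,r_n(A)]=0$), lets you control $r_n(A)(D_n-C)\psi$ by the two strong limits $D_n\to C$ and $\ad_A(D_n)\to\ad_A(C)$, both supplied by the induction hypothesis.

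What each buys: the paper's argument stays entirely within the Helffer--Sj\"ostrand framework already set up, giving an explicit integral expression for the error that fits with the later estimates in the paper; your argument is more elementary and domain--theoretic, never reopening the integral representation after the base case, and it makes transparent exactly which two pieces of information ($D_n\to C$ and $\ad_A D_n\to\ad_A C$ strongly) drive the convergence. A small point worth making explicit in your write-up is the standard fact that $X\in C^1(A)$ implies $X:\mathcal{D}(A)\to\mathcal{D}(A)$, which you use when asserting $D_n\psi,C\psi\in\mathcal{D}(A)$.
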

\begin{proof}
For $k=1$ the statement follows from Lemma \ref{uniformtaylor}. Let $k\in\mathds{N}$ and assume
\begin{equation*}
\slim_{n\to\infty}\ad_{r_n}^{k-1}(B)=\ad_{A}^{k-1}(B).
\end{equation*}The first term on the right-hand side of
\begin{equation*}
\ad_{r_n}(\ad_{r_n}^{k-1}(B))=\ad_{r_n}^{k-1}(\ad_{r_n}(B))=r_n'\ad_{r_n}^{k-1}(\ad_A(B))+\ad_{r_n}^{k-1}(R(r_n,B))
\end{equation*}converges strongly by the induction hypothesis and Lemma \ref{uniformtaylor} since $\ad_A(B)\in C^{k-1}(A)$. $R(r_n,\ad_{r_n}^{k-1}(B))$ is a sum of two integrals:
\begin{eqnarray*}
\ad_{r_n}^{k-1}(R(r_n,B)))&=&\frac{1}{2\pi i}\int_{\mathds{C}}\bar{\partial}\tilde{\rho}(z)z\frac{A}{n}J_n^2(z)\ad^{k-1}_{r_n}(\ad_{A}(B))J_n(z)dz\wedge d\bar{z}\\&&
-\frac{1}{2\pi i}\int_{\mathds{C}}\bar{\partial}\tilde{\rho}(z)zJ_n^2(z)\ad^{k-1}_{r_n}(\ad_{A}(B))\frac{A}{n}J_n(z)dz\wedge d\bar{z}.
\end{eqnarray*}Observe, that
\begin{equation*}
\slim_{n\to\infty}\frac{A}{n}J_n(z)=\slim_{n\to\infty}A(nz-A)^{-1}=0.
\end{equation*}By the uniform boundedness principle, the integrands are strongly convergent and converge to the product of the strong limits. Lemma \ref{uniformtaylor} and the Theorem of Dominated Convergence imply that we may exchange integration with the strong limit $n\to\infty$.
\end{proof}We use of the following expansion formula for commutators.
\begin{lemma}
\label{multcomm1}
Let $K,L\in\mathfrak{B}(\mathcal{H})$. Then, for any $k\in\mathds{N}$,
\begin{equation}
[K,L^k]=\sum_{j=1}^{k}\binom{k}{j}L^{k-j}\ad_L^j(K).
\label{eq:multcomm1}
\end{equation}
\end{lemma}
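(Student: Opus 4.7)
The plan is to proceed by induction on $k$, the core of the argument being a noncommutative binomial/Leibniz-style expansion. The base case $k=1$ is immediate from the definitions: for bounded operators $K$ and $L$, $\ad_L(K)=[K,L]$, so $[K,L^{1}]=\binom{1}{1}L^{0}\ad_L(K)$ tautologically.

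For the inductive step I would start from the elementary commutator--product identity
\begin{equation*}
[K,L^{k+1}] \;=\; [K,L^{k}]\,L \;+\; L^{k}\,[K,L] \;=\; [K,L^{k}]\,L \;+\; L^{k}\,\ad_L(K),
\end{equation*}
and insert the inductive hypothesis into the first term to obtain
\begin{equation*}
[K,L^{k+1}] \;=\; \sum_{j=1}^{k}\binom{k}{j}\,L^{k-j}\,\ad_L^{j}(K)\,L \;+\; L^{k}\,\ad_L(K).
\end{equation*}

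To push the trailing $L$ leftwards in each summand I would use the rewriting
\begin{equation*}
\ad_L^{j}(K)\,L \;=\; L\,\ad_L^{j}(K) \;+\; \ad_L^{j+1}(K),
\end{equation*}
which is nothing but the definition $\ad_L^{j+1}(K)=[\ad_L^{j}(K),L]$ unpacked. Substituting this splits the sum into two pieces. Reindexing $j\mapsto j-1$ in the piece containing $\ad_L^{j+1}$ and absorbing the extra term $L^{k}\ad_L(K)$ into the $j=1$ slot, one finds that the coefficient in front of $L^{k+1-j}\ad_L^{j}(K)$ becomes $\binom{k}{j}+\binom{k}{j-1}$ for $2\le j\le k$, while the boundary terms carry $\binom{k}{1}+1=k+1$ at $j=1$ and $\binom{k}{k}=1$ at $j=k+1$.

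Pascal's rule $\binom{k}{j}+\binom{k}{j-1}=\binom{k+1}{j}$ then collapses everything to the desired identity at level $k+1$. There is no genuine obstacle: boundedness of $K$ and $L$ ensures that all manipulations stay in $\mathfrak{B}(\mathcal{H})$ with no domain subtleties, so the argument is purely algebraic. The only mild care required is the index bookkeeping at the two boundary values $j=1$ and $j=k+1$, which is handled transparently by the Pascal rule once the extra term $L^{k}\ad_L(K)$ is identified as the contribution missing from the reindexed sum.
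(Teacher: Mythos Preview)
Your induction argument is correct: the base case, the commutator--product identity, the rewriting $\ad_L^{j}(K)L=L\ad_L^{j}(K)+\ad_L^{j+1}(K)$, the reindexing, and the collapse via Pascal's rule are all sound, and boundedness removes any domain issues. The paper itself states this lemma without proof (treating it as a standard combinatorial identity), so there is nothing to compare against; your write-up simply supplies the routine verification the authors omitted.
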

It is convenient to regularise the operator $A$ such that we may use the Helffer-Sj\"ostrand calculus and have sufficient flexibility in the proof. Let $g\in C^{\infty}_c(\mathds{R},\mathds{R})$ such that
\begin{equation}
\label{eq:regg1}
g(t)=t\,\forall t\in[-1,1],\,g(t)=2\,\forall t\geq3,\,g(t)=-2\,\forall t\leq-3,\,g'\geq0,
\end{equation}and that $tg'(t)/g(t)$ has a smooth square root; clearly $g\in\mathscr{B}$. We set $g_n(t):=ng(t/n)$ and define $g_n(A)$ by functional calculus. Observe, that
\begin{equation}
n\mapsto g_n^2(t)
\end{equation}is monotonously increasing for all $t\in\mathds{R}$. Set $\gamma(t):=g(t)/t$, for the function $g$ defined in (\ref{eq:regg1}). We may pick an almost analytic extension of $\gamma$, denoted by $\tilde{\gamma}$, such that $\tilde{\gamma}$ satisfies, up to a possibly different constant $C_N$, the same bounds as $\tilde{\rho}$ in (\ref{eq:rhotilde1}).

\section{Finite Regularity of Eigenstates}
\begin{proof}[Proof of Theorem \ref{finitereg}]
Using the convention $A^0=\mathbb{1}$, the statement is correct for $k=0$. Let now be $k\in\mathds{N}$ and assume $\psi\in\mathcal{D}(A^{k-1})$.
The starting point for the proof is
\begin{equation}
0=(\psi,i[h,g_n^kg_mg_n^k]\psi),
\label{eq:trivialvirial1}
\end{equation}which may be rewritten as
\begin{equation}
0=(\psi_n^{(k)},i\ad_{g_m}(h)\psi^{(k)}_n)+2\Re(\psi,g_mi[h,g_n^k]\psi_n^{(k)})+2\Re(\psi,[i[h,g_n^k],g_m]\psi^{(k)}_n),
\label{eq:trivialvirial2}
\end{equation}where we introduced the notation $\psi_n^{(k)}:=g_n^k\psi$. We abbreviate
\begin{equation}
I_0(n,m):=(\psi_n^{(k)},i\ad_{g_m}(h)\psi^{(k)}_n),
\label{eq:I_0def}
\end{equation}
\begin{equation}
I_1(n,m):=2\Re(\psi,g_mi[h,g_n^k]\psi_n^{(k)})
\label{eq:I1def}
\end{equation}and
\begin{equation}
I_2(n,m):=2\Re(\psi,[i[h,g_n^k],g_m]\psi^{(k)}_n)=2\Re(\psi,i[[h,g_m],g_n^k]\psi^{(k)}_n).
\label{eq:I2def}
\end{equation}We organise the proof in three steps. In the first step we extract from $I_1$ a term $I_0'$ which is of a similar type as $I_0$. Then, starting with (\ref{eq:trivialvirial2}) upper bounds to $I_0$, $I_0'$ are established. Finally, using Mourre's estimate we find lower bounds to $I_0$, $I_0'$, from which we conclude $\psi\in\mathcal{D}(A^k)$.
\begin{itemize}
\item[\underline{\textit{Step 1.}}]
\end{itemize}By an application of Lemma (\ref{multcomm1}) we rewrite $I_1(n,m)$ as
\begin{eqnarray}
\nonumber
I_1(n,m)&=&2\Re\left(i\sum_{j=2}^{k}\binom{k}{j}E_1(j,k,n,m)\right)+2k\Re(i(\psi_n^{(k-1)},g_mR(g_n,h)\psi_n^{(k)}))\\&&+2k\Re(i(\psi_n^{(k-1)},g_mg'_n\ad_{A}(h)\psi_n^{(k)})),\label{eq:I1}
\end{eqnarray}where $E_1(j,k,n,m):=(\psi_{n}^{(k-j)},g_m\ad_{g_n}^{j}(h)\psi^{(k)}_n)$ and $2k\Re(i(\psi_n^{(k-1)},g_mR(g_n,h)\psi_n^{(k)}))$ are present if $k\geq2$ only, in which case $\psi\in\mathcal{D}(A)$ by induction hypothesis. We discuss the term in the last line of (\ref{eq:I1}) first. One computes
\begin{eqnarray*}
2k\Re(i(\psi_n^{(k-1)},g_mg'_n\ad_{A}(h)\psi_n^{(k)}))&=&2k\Re(i(\psi_n^{(k)},\gamma_mp_n^2\ad_{A}(h)\psi_n^{(k)}))\\
&=&2k\Re(i(\psi_n^{(k)},\gamma_mp_n\ad_{A}(h)p_n\psi_n^{(k)}))\\&&+2k\Re(i(\psi_n^{(k)},\gamma_mp_n[p_n,\ad_{A}(h)]\psi_n^{(k)})),
\end{eqnarray*}with $\gamma_m$ being the operator $\gamma_m(A)$ and
\begin{equation*}
p(t):=\sqrt{\frac{tg'(t)}{g(t)}},\,p_n(t):=p(t/n).
\end{equation*}Hence, with
\begin{equation*}
E_1(j,k,n):=\lim_{m\to\infty}E_1(j,k,n,m)=(A\psi_{n}^{(k-j)},\ad_{g_n}^{j}(h)\psi^{(k)}_n),\,\,k\geq j\geq2,
\end{equation*}we obtain
\begin{eqnarray}
\nonumber
I_1(n)&:=&\lim_{m\to\infty}I_1(n,m)\\\nonumber&=&2\Re\left(i\sum_{j=2}^{k}\binom{k}{j}E_1(j,k,n)\right)+2k\Re(i(\psi_n^{(k-1)},AR(g_n,h)\psi_n^{(k)}))\\&&+2k\Re(i(\psi_n^{(k)},p_n[p_n,\ad_{A}(h)]\psi_n^{(k)}))+2k(\psi_n^{(k)},p_ni\ad_{A}(h)p_n\psi_n^{(k)}).
\label{eq:I1lim}
\end{eqnarray}Set
\begin{equation}
I_0'(n):=2k(\psi_n^{(k)},p_ni\ad_{A}(h)p_n\psi_n^{(k)}),\,\,I_1'(n):=I_1(n)-I_0'(n).
\label{eq:I_0'def}
\end{equation}
\begin{itemize}
\item[\underline{\textit{Step 2.}}]
\end{itemize}First note that by an application of Lemma \ref{multcomm}
\begin{eqnarray*}
I_2(n)&:=&\lim_{m\to\infty}I_2(n,m)=2\Re(\psi,i[\ad_A(h),g_n^k]\psi^{(k)}_n)\\
&=&2\Re\left(i\sum_{j=1}^{k}\binom{k}{j}E_2(j,k,n)\right),
\end{eqnarray*}with
\begin{equation*}
E_2(j,k,n):=(\psi_n^{(k-j)},\ad_{g_n}^j(\ad_{A}(h))\psi_n^{(k)}),\,\,k\geq j\geq1.
\end{equation*}Equation (\ref{eq:trivialvirial2}) may be rewritten as
\begin{equation}
I_0(n)+I_0'(n)=-I_1'(n)-I_2(n).
\label{eq:trivialvirial3}
\end{equation}In order to find an upper bound for the right hand side, we first estimate $E_1(j,k,n)$, $E_2(j,k,n)$ by
\begin{eqnarray*}
2|E_1(j,k,n)|&\leq&\epsilon_{jk}^{-1}\|\ad_{g_n}^{j}(h)g_n^{k-j}A\psi\|^2+\epsilon_{jk}\|\psi_n^{(k)}\|^2,\\
2|E_2(j,k,n)|&\leq&\mu_{jk}^{-1}\|\ad_{g_n}^{j}(\ad_A(h))\psi_n^{(k-j)}\|^2+\mu_{jk}\|\psi^{(k)}\|^2,
\end{eqnarray*}for all $\mu_{jk},\epsilon_{jk}>0$. The terms
\begin{equation*}
\|\ad_{g_n}^{j}(h)g_n^{k-j}A\psi\|,\,\,\|\ad_{g_n}^{j}(\ad_A(h))\psi_n^{(k-j)}\|
\end{equation*}are uniformly bounded in $n$ by Lemma \ref{multcomm}, $h\in C^{k+1}(A)$ and the induction hypothesis. For the remaining terms in (\ref{eq:I1lim}) we have
\begin{eqnarray*}
2k|(i(\psi_n^{(k-1)},AR(g_n,h)\psi_n^{(k)})|&\leq&k\left(\delta^{-1}\|R(g_n,h)A\psi^{(k-1)}\|^2+\delta\|\psi^{(k)}\|^2\right),\\
2k|(\psi_n^{(k)},p_n[p_n,\ad_{A}(h)]\psi_n^{(k)})|&\leq& k(\nu^{-1}\|[p_n,i\ad_A(h)]g_n\psi_n^{(k-1)}\|^2+\nu\|\psi_n^{(k)}\|^2).
\end{eqnarray*}$R(g_n,h)A$ is uniformly bounded in virtue of Lemma \ref{uniformtaylor}. The function $t\mapsto p(t)$ is by assumption smooth. Note that
\begin{eqnarray*}
[p_n,i\ad_A(h)]g_n=[p_n,i\ad_A(h)]A\gamma_n.
\end{eqnarray*}Further, since $p\in C^{\infty}_c(\mathds{R})$, an application of Proposition \ref{almostana} together with
\begin{equation*}
[p_n,\ad_A(h)]A=\frac{-1}{2\pi i}\int_{\mathds{C}}\bar{\partial}\tilde{p}(z)J_n(z)\ad_A^2(h)\frac{A}{n}J_n(z)dz\wedge d\bar{z}
\end{equation*}shows the uniform boundedness of $[p_n,\ad_A(h)]g_n$. For $1\leq j\leq k-1$ is $(\psi_n^{(j)})_{n\in\mathds{N}}$ convergent in norm to $A^j\psi$ and hence $(\|\psi_n^{j}\|)_{n\in\mathds{N}}$ is bounded. Choose now $\mu_{jk}:=\binom{k}{j}^{-1}k^{-1}C_0/12$, $\epsilon_{jk}:=\binom{k}{j}^{-1}(k-1)^{-1}C_0/12$, $\nu:=C_0/(12k)=:\delta$ and observe
\begin{equation}
I_0(n)+I_0'(n)-\frac{C_0}{3}\leq I_3(n),
\label{eq:upperbound}
\end{equation}where $(I_3(n))_{n\in\mathds{N}}$ is a bounded sequence.
\begin{itemize}
\item[\underline{\textit{Step 3.}}]
\end{itemize}Note, that we may assume $f_{\mathrm{loc}}(x)=\chi(h(x))$, $\forall x\in\mathds{R}$, for some compactly supported smooth function $\chi$ because $h$ is chosen to be invertible on the support of $f_{\mathrm{loc}}$. This implies $f_{\mathrm{loc}}(H)\in C^{k+1}(A)$, since $h\in C^{k+1}(A)$, see \cite[Prop.2.23]{GeorgescuGerardMoeller2004a}. Inserting the Mourre estimate from Condition \ref{condmourre1} yields
\begin{equation*}
(\psi_n^{(k)},i[h,A]\psi_n^{(k)})\geq C_0\|\psi_n^{(k)}\|^2-C_1\|f_{\mathrm{loc},\perp}\psi_n^{(k)}\|^2-(\psi_n^{(k)},K\psi_n^{(k)}).
\end{equation*}The second term is evaluated by
\begin{equation*}
f_{\mathrm{loc},\perp}g_n^k\psi=-\sum_{l=1}^{k}\binom{k}{l}(-1)^l\ad_{g_n}^l(f_{\mathrm{loc}})g_n^{k-l}\psi,
\end{equation*}where we used, that $\psi$ is an eigenstate and an adjoint version of (\ref{eq:multcomm1}). Thus, the contributions from this term are uniformly bounded in $n$ by Lemma \ref{multcomm} and the induction hypothesis. The spectral projection $\textbf{1}_{|A|\leq\Lambda}(A)$ defines a partition of unity, $\mathbb{1}=\textbf{1}_{|A|\leq\Lambda}(A)+\textbf{1}_{|A|>\Lambda}(A)$. Thus, we may write
\begin{equation*}
(\psi_n^{(k)},K\psi_n^{(k)})=(\psi_n^{(k)},\textbf{1}_{|A|\leq\Lambda}(A)K\psi_n^{(k)})+(\psi_n^{(k)},\textbf{1}_{|A|>\Lambda}(A)K\psi_n^{(k)}).
\end{equation*}Furthermore, we may estimate
\begin{equation*}
|(\psi_n^{(k)},\textbf{1}_{|A|\leq\Lambda}(A)K\psi_n^{(k)})|\leq\frac12\left(\frac{\|K\textbf{1}_{|A|\leq\Lambda}(A)\psi_n^{(k)}\|^2}{\nu}+\nu\|\psi_n^{(k)}\|^2\right)
\end{equation*}and
\begin{equation*}
|(\psi_n^{(k)},\textbf{1}_{|A|>\Lambda}(A)K\psi_n^{(k)})|\leq\frac12\left(\frac{\|\textbf{1}_{|A|>\Lambda}(A)K\|^2}{\delta}+\delta\right)\|\psi_n^{(k)}\|^2.
\end{equation*}Observe that since $K$ is compact and $\slim_{\Lambda\to\infty}\chi_{|A|>\Lambda}=0$ we have
\begin{equation*}
\forall\epsilon>0\,\exists\Lambda_{\epsilon}>0: \,\|\chi_{|A|>\Lambda_{\epsilon}}K\|<\epsilon,
\end{equation*}but this implies $\forall\Lambda\geq\Lambda_{\epsilon}$
\begin{equation*}
\|\textbf{1}_{|A|>\Lambda}(A)K\|=\|\textbf{1}_{|A|>\Lambda}(A)\textbf{1}_{|A|>\Lambda_{\epsilon}}(A)K\|\leq\epsilon.
\end{equation*}Thus, we may choose $\nu=C_0/9$, $\delta=C_0/9$ and pick then a $\Lambda>0$ big enough, such that
\begin{equation}
2\|\textbf{1}_{|A|>\Lambda}(A)K\|^2\leq C_0^2/(9)^2,
\label{eq:choiceofLambda}
\end{equation}i.e. $C_0-\nu-\delta-\epsilon=C_0/3$. Thus we arrive at
\begin{equation*}
I_0(n)+\frac{9\|K\textbf{1}_{|A|\leq\Lambda}(A)\psi_n^{(k)}\|^2}{2C_0}+C_1\left\|\sum_{l=1}^{k}\binom{k}{l}\ad_{g_n}^l(f_{\mathrm{loc}})g_n^{k-l}\psi\right\|^2\geq\frac{2C_0}{3}\|\psi_n^{k}\|^2.
\end{equation*}The left-hand side is bounded in $n$ by \textit{Step 2} and the induction hypothesis. Analogously, one finds for $I_0'(n)$
\begin{equation*}
I_0'(n)+b_n\geq\frac{C_0}{3}\|p_n\psi_n^{(k)}\|^2,
\end{equation*}for some $b_n\geq0$, $n\in\mathds{N}$ and $\sup_{n\in\mathds{N}}b_n<\infty$. Let
\begin{equation*}
I_4(n):=b_n+\frac{9\|K\textbf{1}_{|A|\leq\Lambda}(A)\psi_n^{(k)}\|^2}{2C_0}+C_1\left\|\sum_{l=1}^{k}\binom{k}{l}\ad_{g_n}^l(f_{\mathrm{loc}})g_n^{k-l}\psi\right\|^2.
\end{equation*}Finally, this gives with (\ref{eq:upperbound})
\begin{equation*}
\frac{C_0}{3}\left(\|p_n\psi_n^{(k)}\|^2+\|\psi_n^{(k)}\|^2\right)\leq I_3(n)+I_4(n),
\end{equation*}where the right-hand side is bounded in $n$. By definition of $g$ the result is now a consequence of the Theorem of Monotone Convergence applied to the left-hand side.
\end{proof}

\section{Eigenstates as analytic vectors}
\label{sectanalyticity}
To obtain explicit bounds, independent of the regularisations of $A$, we apply Lemma \ref{multcomm} and use (\ref{eq:trivialvirial3}) as a starting point.
\begin{prop}
Let $k\in\mathds{N}$, $h_{\lambda}(H)\in C^{k+1}(A)$ and Condition \ref{condmourre1} be satisfied. Then, for any eigenstate $\psi$ of $H$ with eigenvalue $\lambda\in\supp(f_{\mathrm{loc}})$ and $\Lambda\geq0$ being chosen as in (\ref{eq:choiceofLambda}) we have
\begin{eqnarray}
\nonumber
\|\psi^{(k)}\|^2&\leq&\frac{27\|K\textbf{1}_{|A|\leq\Lambda}(A)A^k\psi\|^2}{C_0^2}+\frac{6C_1}{C_0}\left\|\sum_{l=1}^{k}\binom{k}{l}\ad_{A}^l(f_{\mathrm{loc}})A^{k-l}\psi\right\|^2\\\nonumber&&
+\frac{96}{((1+2k)C_0)^2}\left(\|\ad_A^{k+1}(h)\psi\|^2+k^2\|\ad_A^2(h)A^{k-1}\psi\|^2\right)\\\nonumber&&
+\frac{12}{(1+2k)C_0}\sum_{j=2}^{k-1}\binom{k+1}{j+1}\left(|(A^{k+1-j}\psi,\ad_A^{j+1}(h)A^{k-1}\psi)|\right.\\&&+\left.|(A^{k-j}\psi,\ad_A^{j+2}(h)A^{k-1}\psi)|\right).
\label{eq:explicitbound1}
\end{eqnarray}
\end{prop}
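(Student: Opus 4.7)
The plan is to continue from the identity~(\ref{eq:trivialvirial3}), namely $I_0(n)+I_0'(n)=-I_1'(n)-I_2(n)$, established in the proof of Theorem~\ref{finitereg}, and pass to the limit $n\to\infty$ cleanly. Having secured $\psi\in\mathcal{D}(A^k)$ via Theorem~\ref{finitereg}, one may apply Lemma~\ref{multcomm} together with the strong convergence $g_n^j\psi\to A^j\psi$ for $0\le j\le k$ to convert each regularised quantity into an honest expression in the iterated $A$-commutators of $h$ acting on $A^{k-1}\psi$ or $A^k\psi$.

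First I would take $n\to\infty$ on both sides of~(\ref{eq:trivialvirial3}). On the left, since $p(0)=1$ we have $p_n(A)\to\mathbb{1}$ strongly, so
\[
\lim_{n\to\infty}\bigl(I_0(n)+I_0'(n)\bigr)=(1+2k)\,(A^k\psi,\,i\ad_A(h)A^k\psi).
\]
On the right, Lemma~\ref{uniformtaylor} makes the contributions involving $R(g_n,h)$ and $[p_n,\ad_A(h)]$ vanish, while the sums built from $E_1(j,k,n)$ and $E_2(j,k,n)$ converge to matrix elements of the form $(A^{k+1-j}\psi,\ad_A^j(h)A^k\psi)$, $j\ge 2$, and $(A^{k-j}\psi,\ad_A^{j+1}(h)A^k\psi)$, $j\ge 1$, each multiplied by $\binom{k}{j}$.

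Next I would move one factor of $A$ from the right slot onto the left using the identity $\ad_A^j(h)A=A\ad_A^j(h)+\ad_A^{j+1}(h)$, so that
\[
(A^m\psi,\ad_A^j(h)A^k\psi)=(A^{m+1}\psi,\ad_A^j(h)A^{k-1}\psi)+(A^m\psi,\ad_A^{j+1}(h)A^{k-1}\psi).
\]
Applied to every inner product, the resulting double sum is regrouped by the commutator order; invoking Pascal's identity $\binom{k}{j}+\binom{k}{j+1}=\binom{k+1}{j+1}$ produces the coefficient $\binom{k+1}{j+1}$ at the interior indices $2\le j\le k-1$ appearing in~(\ref{eq:explicitbound1}), and leaves the two endpoint indices $j=1$ and $j=k$ as separate terms. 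A single Cauchy--Schwarz then turns those endpoint terms into $k^2\|\ad_A^2(h)A^{k-1}\psi\|^2$ and $\|\ad_A^{k+1}(h)\psi\|^2$, respectively.

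Finally, I would apply the Mourre estimate~(\ref{eq:smoothmourre}) to the left-hand limit, using $f_{\mathrm{loc}}(H)\psi=\psi$ and Lemma~\ref{multcomm1} to expand
\[
f_{\mathrm{loc},\perp}(H)A^k\psi=-\sum_{\ell=1}^{k}\binom{k}{\ell}\ad_A^{\ell}(f_{\mathrm{loc}})A^{k-\ell}\psi,
\]
and splitting the compact remainder $K$ via the partition of unity $\mathbb{1}=\mathbb{1}_{|A|\le\Lambda}+\mathbb{1}_{|A|>\Lambda}$ with $\Lambda$ chosen as in~(\ref{eq:choiceofLambda}). Calibrating the $\varepsilon$-parameters in each Cauchy--Schwarz inequality so that a definite fraction of the Mourre gap $(1+2k)C_0\|A^k\psi\|^2$ survives on the left to absorb the $\varepsilon\|A^k\psi\|^2$-type error terms, and then dividing through, yields~(\ref{eq:explicitbound1}). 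The main obstacle is the combinatorial bookkeeping that regroups the $E_1$ and $E_2$ sums into the claimed $\binom{k+1}{j+1}$-weighted form, combined with the careful distribution of the $\varepsilon$-parameters so that the numerical prefactors $27$, $6$, $96$, and $12$ and the denominators $C_0$, $C_0^2$, $(1+2k)C_0$, and $((1+2k)C_0)^2$ come out precisely as stated.
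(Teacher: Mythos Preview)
Your overall strategy coincides with the paper's: pass to the limit in~(\ref{eq:trivialvirial3}) using Theorem~\ref{finitereg} and Lemma~\ref{multcomm}, merge the $E_1$- and $E_2$-sums via Pascal's identity, shift one power of $A$ off $\psi^{(k)}$ in the interior matrix elements, and combine the Mourre lower bound with calibrated Cauchy--Schwarz estimates. The paper in fact performs the Pascal step \emph{before} any commutation, simply by observing that $E_1(j+1,k)=E_2(j,k)$ (both equal $(A^{k-j}\psi,\ad_A^{j+1}(h)A^k\psi)$), so that the two sums collapse immediately to $\sum_{j=1}^{k-1}\binom{k+1}{j+1}E_2(j,k)+E_2(k,k)$; only afterwards is the commutation trick applied, and only on the interior range $2\le j\le k-1$.

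There is one genuine gap in your description. You write that the commutation identity is ``applied to every inner product'' and the resulting double sum then regrouped. Taken literally this fails at the term $E_2(k,k)=(\psi,\ad_A^{k+1}(h)A^k\psi)$: shifting one $A$ there would produce $(\psi,\ad_A^{k+2}(h)A^{k-1}\psi)$, but only $h\in C^{k+1}(A)$ is assumed, so $\ad_A^{k+2}(h)$ is not available. The paper avoids this by handling $E_2(k,k)$ directly with Cauchy--Schwarz, and likewise for $E_2(1,k)$, where commuting would not help anyway since the resulting $(A^k\psi,\ad_A^2(h)A^{k-1}\psi)$ still carries $A^k\psi$ on the left. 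Your concluding sentence about the endpoint terms in fact matches this correct treatment, so the issue lies in the order of operations you describe rather than in the intended outcome; but as written the argument would require one more commutator than the hypothesis provides.
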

\begin{remark}
The bounds derived in this proposition make the locally uniform boundedness of $A^k\psi$ in the sense of Condition 1.10 of \cite{FaupinMoellerSkibsted2010b} apparent.
\end{remark}
\begin{proof}
Note that $\psi\in \mathcal{D}(A^k)$ by Theorem \ref{finitereg}. We observe
\begin{equation*}
\lim_{n\to\infty}[p_n,\ad_A(h)]=\lim_{n\to\infty}\frac{-1}{n2\pi i}\int_{\mathds{C}}\bar{\partial}\tilde{p}(z)J_n(z)\ad_A^2(h)J_n(z)dz\wedge d\bar{z}=0,
\end{equation*}since $\bar{\partial}\tilde{p}$ has compact support and $h\in C^{k+1}(A)$. Further with $\psi^{(l)}:=A^l\psi$, for $0\leq l\leq k$,
\begin{eqnarray*}
\lim_{n\to\infty}E_1(j,k,n)&=&(\psi^{(k+1-j)},\ad_A^j(h)\psi^{(k)})=:E_1(j,k),\,\,k\geq j\geq2,\\
\lim_{n\to\infty}E_2(j,k,n)&=&(\psi^{(k-j)},\ad_A^{j+1}(h)\psi^{(k)})=:E_2(j,k),\,\,k\geq j\geq1.
\end{eqnarray*}Note that $E_1(j+1,k)=E_2(j,k)$ for $k-1\geq j\geq1$. Thus, equation (\ref{eq:trivialvirial3}) reads after taking the limit $n\to\infty$
\begin{eqnarray*}
(1+2k)(\psi^{(k)},i\ad_A(h)\psi^{(k)})=2\Re i\sum_{j=1}^{k-1}\binom{k+1}{j+1}E_2(j,k)+2\Re iE_2(k,k).
\end{eqnarray*}The term $E_2(k,k)$ is singular in the sense that one cannot commute one power of $A$ to the left-hand side and the estimate for $E_2(1,k)$ does not improve under such a manipulation. To estimate $E_2(1,k)$ we note
\begin{equation*}
-2\Re(\psi^{(k-1)},i\ad_A^2(h)\psi^{(k)})\leq\frac{1}{\epsilon}\|\ad_A^2(h)\psi^{(k-1)}\|^2+\epsilon\|\psi^{(k)}\|^2.
\end{equation*}We pick up a combinatorial factor $(k+1)k/2$ and thus choose
\begin{equation*}
\epsilon=\frac{(1+2k)C_0}{(k+1)k}2^{-3}.
\end{equation*}For $E_2(k,k)$, the combinatorial factor is $1$ and we estimate
\begin{equation*}
-2\Re(\psi,i\ad_A^{k+1}(h)\psi^{(k)})\leq\frac{1}{\mu}\|\ad_A^{k+1}(h)\psi\|^2+\mu\|\psi^{(k)}\|^2.
\end{equation*}Choose now
\begin{equation*}
\mu=(1+2k)C_02^{-4}.
\end{equation*}This gives with $(k+1)k/2\leq k^2$ the inequality
\begin{eqnarray*}
(\psi^{(k)},i\ad_A(h)\psi^{(k)})&-&C_02^{-3}\|\psi^{(k)}\|^2\leq\frac{2}{1+2k}\sum_{j=2}^{k-1}\binom{k+1}{j+1}|E_2(j,k)|\\&&+\frac{16}{(1+2k)^2C_0}\left(\|\ad_A^{k+1}(h)\psi\|^2+k^2\|\ad_A^2(h)\psi^{(k-1)}\|^2\right).
\end{eqnarray*}Note, that the upper bounds are modified as compared to the bounds in \textit{Step 2} of the proof of Theorem \ref{finitereg}. Namely we use for $2\leq j\leq k-1$,
\begin{equation*}
E_2(j,k)=(\psi^{(k+1-j)},\ad_A^{j+1}(h)\psi^{(k-1)})+(\psi^{(k-j)},\ad_A^{j+2}(h)\psi^{(k-1)}).
\end{equation*}Next, lower bounds are established using an analogous argument as in \textit{Step 3} of the proof of Theorem \ref{finitereg}. Observe that
\begin{eqnarray*}
(\psi^{(k)},&i\ad_A(h)&\psi^{(k)})+\frac{9\|K\textbf{1}_{|A|\leq\Lambda}(A)\psi^{(k)}\|^2}{2C_0}\\&&+C_1\left\|\sum_{l=1}^{k}\binom{k}{l}\ad_{A}^l(f_{\mathrm{loc}})\psi^{(k-l)}\right\|^2-C_02^{-3}\|\psi^{(k)}\|^2\geq\frac{C_0}{6}\|\psi^{(k)}\|^2.
\end{eqnarray*}Finally, we arrive at
\begin{eqnarray*}
\frac{C_0}{6}\|\psi^{(k)}\|^2&\leq&\frac{9\|K\textbf{1}_{|A|\leq\Lambda}(A)A^k\psi\|^2}{2C_0}+C_1\left\|\sum_{l=1}^{k}\binom{k}{l}\ad_{A}^l(f_{\mathrm{loc}})A^{k-l}\psi\right\|^2\\\nonumber&&
+\frac{16}{(1+2k)^2C_0}\left(\|\ad_A^{k+1}(h)\psi\|^2+k^2\|\ad_A^2(h)A^{k-1}\psi\|^2\right)\\\nonumber&&
+\frac{2}{1+2k}\sum_{j=2}^{k-1}\binom{k+1}{j+1}|E_2(j,k)|,
\end{eqnarray*}which implies (\ref{eq:explicitbound1}).
\end{proof}

\begin{lemma}
\label{Mortensformula}
Let $K,L\in\mathfrak{B}(\mathcal{H})$ and $J(z):=(z-K)^{-1}$ for $z\in\rho(K)$. Then,
\begin{equation}
\ad_L^k(J(z))=\sum_{a\in C(k)}\frac{k!}{a_1!\cdot\dots\cdot a_{n_a}!}J(z)\prod_{i=1}^{n_a}\ad_L^{a_i}(K)J(z),
\label{eq:Mortensformula}
\end{equation}where $C(k)$ denotes the set of all possible decompositions of $k=a_1+\dots+a_{n_a}$ in sums of natural numbers and further $a:=(a_1,\dots,a_{n_a})$.
\end{lemma}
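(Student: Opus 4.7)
The plan is to prove the formula by induction on $k$. For the base case $k=1$, I would use the standard resolvent identity $[L,(z-K)^{-1}] = (z-K)^{-1}[L,K](z-K)^{-1}$, which gives $\ad_L(J(z)) = J(z)\ad_L(K)J(z)$; this matches the right-hand side, where $C(1) = \{(1)\}$ contributes a single term with coefficient $1!/1! = 1$.

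For the inductive step, assuming the formula holds for $k$, I would apply $\ad_L$ to both sides and use the Leibniz rule $\ad_L(XY) = \ad_L(X)Y + X\ad_L(Y)$ distributed across each summand
\[
T_a = J(z)\,\ad_L^{a_1}(K)\,J(z)\,\ad_L^{a_2}(K)\,J(z)\cdots J(z)\,\ad_L^{a_{n_a}}(K)\,J(z).
\]
Each of the $n_a+1$ resolvent factors contributes (via the base case) a term in which that $J(z)$ is replaced by $J(z)\ad_L(K)J(z)$; this amounts to inserting a new part equal to $1$ at one of the $n_a+1$ positions in the composition $a$. Each of the $n_a$ commutator factors $\ad_L^{a_i}(K)$ contributes a term with $\ad_L^{a_i}(K)$ replaced by $\ad_L^{a_i+1}(K)$; this corresponds to increasing $a_i$ by $1$.

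The main obstacle is the bookkeeping that reassembles these contributions into the claimed formula for $k+1$. Given a target composition $b = (b_1,\dots,b_{n_b})$ of $k+1$, I would show that the multinomial coefficient satisfies
\[
\frac{(k+1)!}{b_1!\cdots b_{n_b}!} \;=\; \sum_{j:\,b_j=1}\frac{k!}{\prod_{i\neq j}b_i!} \;+\; \sum_{j:\,b_j\geq 2}\frac{k!}{b_1!\cdots(b_j-1)!\cdots b_{n_b}!},
\]
which, after using $b_j!=1$ when $b_j=1$, collapses to $\sum_j b_j\,\frac{k!}{\prod_i b_i!} = (k+1)\,\frac{k!}{\prod_i b_i!}$. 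This is precisely the identity that matches the contribution of each composition $a$ of $k$ (via insertion of a $1$ or incrementing a part) to the coefficient of $T_b$ in $\ad_L^{k+1}(J(z))$, closing the induction.

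Finally, convergence is not an issue: everything takes place in $\mathfrak{B}(\mathcal{H})$ since $J(z)$ and the iterated commutators $\ad_L^{a_i}(K)$ are bounded operators, so the operator products and finite sums are well-defined throughout the argument.
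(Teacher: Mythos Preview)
Your induction argument is correct and complete: the base case via the resolvent identity and the Leibniz expansion in the inductive step are exactly the right moves, and your combinatorial identity $\sum_j b_j \cdot \frac{k!}{\prod_i b_i!} = \frac{(k+1)!}{\prod_i b_i!}$ cleanly accounts for the two ways (insertion of a new part $1$, or increment of an existing part) a composition of $k$ produces a given composition of $k+1$. The paper itself does not supply a proof here; it merely asserts the formula is easily verified and refers to an external source, so your write-up is in fact more detailed than what the paper provides, while following the natural (and presumably the cited) route.
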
The formula may easily be observed to be correct. For a proof of similar statement see \cite{Rasmussen2010}.
\begin{proof}[Proof of Lemma \ref{lemmaCkproperty}]
We proof the statement by establishing the formula (\ref{eq:Mortensformula}) inductively for $K$ replaced by $H$ and $L$ replaced by $A$.
For $k=1$ we observe $\ad_A(J(z))=J(z)\ad_A(H)J(z)$, since $H\in C^1(A)$. Assume now for $k-1\in\mathds{N}$, $\rho(H)$,
\begin{equation}
\ad_A^{k-1}(J(z))=\sum_{a\in C(k-1)}\frac{(k-1)!}{a_1!\cdot\dots\cdot a_{n_a}!}J(z)\prod_{j=1}^{n_a}\ad_A^{a_j}(H)J(z).
\label{eq:Mortensformulaunbounded}
\end{equation}Observe, that $\ad_A^{a_j}(H)J(z)\in\mathfrak{B}(\mathcal{H})$, for all $1\leq j\leq n_a$. It is well known that the bounded elements in $C^1(A)$ form an algebra. This means that it suffices to check that each of the operators $\ad_A^{a_j}(H)J(z)$ is in $C^1(A)$. For $0\leq m\leq k-1$ we consider $[\ad_A^{m}(H)J(z),A]$. Let $\psi,\phi\in\mathcal{D}(A)\cap\mathcal{D}(H)$, then
\begin{eqnarray*}
(\psi,[\ad_A^{m}(H)J(z),A]\phi)&=&((-1)^{m}J(\bar{z})\ad_A^{m}(H)\psi,A\phi)\\&&+(A\psi,\ad_A^m(H)J(z)\phi)\\
&=&(\psi,[\ad_A^{m}(H),A]J(z)\phi)\\&&+((-1)^{m}\ad_A^{m}(A)\psi,J(z)\ad_A(H)J(z)\phi),
\end{eqnarray*}where in the last line we used
\begin{eqnarray*}
AJ(z)\psi=J(z)A\psi+J(z)\ad_A(H)J(z)\psi,\,\,\forall\psi\in\mathcal{D}(H).
\end{eqnarray*}By assumption, $[\ad_A^{m}(H),A]$ extends to a an element $\ad_A^{m+1}(H)\in\mathfrak{B}(\mathcal{D}(H),\mathcal{H})$, which implies that $[\ad_A^{m}(H)J(z),A]$ extends to a bounded operator for $0\leq m\leq k-1$, i.e. $\ad_A^m(H)J(z)\in C^1(A)$. Hence $H\in C^k(A)$.
\end{proof}

We devote the rest of this section to prove Theorem \ref{analyticity}.
\begin{proof}[Proof of Theorem \ref{analyticity}]
We organise the proof for analyticity in two steps and, for simplicity, we suppose the eigenvalue $\lambda$ with respect to $H,\psi$ is $0$. We consider $h(x):=x(1+\nu x^2)^{-1}$, for sufficiently small $\nu>0$, see Section \ref{mourre} and replace $f_{\mathrm{loc}}$ by $f_{\mathrm{ana}}$, defined in (\ref{eq:f}). By assumption and Section \ref{mourre}, this $h$ satisfies Condition \ref{condmourre1}. The first step consists of proving that $\psi$ is an analytic vector for $A$ under the condition
\begin{equation}
\|\ad_A^k(h)\|,\|\ad_A^k(f_{\mathrm{ana}})\|\leq k!w^{-k},\,\,\forall k\in\mathds{N},
\label{eq:condad2}
\end{equation}for some $w\in\mathds{R}_+$. In the second step we prove (\ref{eq:condad2}) using Condition \ref{condad}. Note, that it is sufficient to prove analyticity of the map $\theta\mapsto\exp(i\theta A)\psi=:\psi(\theta)$ in some ball around $0$. Namely, if $\psi(\cdot)$ is analytic in a ball then $\tilde{\psi}(t+\theta):=\exp(itA)\psi(\theta)$, $t\in\mathds{R}$ defines an analytic extension of this map to a strip. Alternatively, one observes the bounds in (\ref{eq:explicitbound1}) to be invariant under conjugation of $H$ with $\exp{itA}$, $t\in\mathds{R}$ and hence $\psi(\cdot)$ extends to an analytic function in a strip around the real axis.
\begin{itemize}
\item[\underline{\textit{Step 1.}}]
\end{itemize}Assume Condition (\ref{eq:condad2}) to be satisfied and abbreviate
\begin{eqnarray*}
\alpha(j,k)&:=&\frac{12}{(1+2k)C_0}\binom{k+1}{j+1}|(\psi^{(k+1-j)},\ad_A^{j+1}(h)\psi^{(k-1)})|,\\
\beta(j,k)&:=&\frac{12}{(1+2k)C_0}\binom{k+1}{j+1}|(\psi^{(k-j)},\ad_A^{j+2}(h)\psi^{(k-1)})|.
\end{eqnarray*}Motivated by Condition (\ref{eq:condad2}), we use the ansatz
\begin{equation*}
\|\psi^{(l)}\|\leq l!q^{-l},\,\,\mathrm{for}\,1\leq l\leq k-1,
\end{equation*}for some $q\in\mathds{R}_+$, $q<w$, independent of $l$. Employing the assumptions gives
\begin{equation*}
\alpha(j,k)\leq k!^2	q^{-2k}\frac{12}{C_0wk}(k+1-j)\left(\frac{q}{w}\right)^{j},
\end{equation*}thus
\begin{equation*}
(k!^2q^{-2k})^{-1}\sum_{j=2}^{k-1}\alpha(j,k)\leq\frac{12}{C_0w}\left(\frac{q}{w}\right)^{2}\sum_{j=0}^{k-3}\left(\frac{q}{w}\right)^{j}\leq\frac{12}{C_0w}\left(\frac{q}{w}\right)^{2}\frac{1}{1-\left(\frac{q}{w}\right)}.
\end{equation*}Analogously,
\begin{equation*}
\beta(j,k)\leq k!^2q^{-2k}\frac{12}{C_0wk}(j+2)\left(\frac{q}{w}\right)^{j+1}
\end{equation*}and consequently
\begin{equation*}
(k!^2q^{-2k})^{-1}\sum_{j=2}^{k-1}\beta(j,k)\leq\frac{24}{C_0w}\left(\frac{q}{w}\right)^{3}\sum_{j=0}^{k-3}\left(\frac{q}{w}\right)^{j}\leq\frac{24}{C_0w}\left(\frac{q}{w}\right)^{3}\frac{1}{1-\left(\frac{q}{w}\right)}.
\end{equation*}We continue by estimating, (\ref{eq:multcomm1}),
\begin{eqnarray*}
\left(\frac{6C_1}{C_0}\right)^{\frac12}\|f_{\mathrm{ana},\perp}\psi^{(k)}\|&\leq& \left(\frac{6C_1}{C_0}\right)^{\frac12}\sum_{j=1}^k\binom{k}{j}j!(k-j)!\left(\frac{q}{w}\right)^{j}q^{-k}\\&\leq& k!q^{-k}\left(\frac{6C_1}{C_0}\right)^{\frac12}\left(\frac{q}{w}\right)\frac{1}{1-\left(\frac{q}{w}\right)}.
\end{eqnarray*}Further,
\begin{equation*}
\frac{96k^2\|\ad_A^2(h)\psi^{(k-1)}\|^2}{C_0^2(1+2k)^2}\leq\frac{24}{C_0^2k^2w^2}\left(\frac{q}{w}\right)^2k!^2q^{-2k},
\end{equation*}
\begin{equation*}
\frac{96\|\ad_A^{k+1}(h)\psi\|^2}{C_0^2(1+2k)^2}\leq\frac{96}{C_0^2w^2}\left(\frac{q}{w}\right)^{2k}k!^2q^{-2k}
\end{equation*}and finally
\begin{equation*}
\frac{27}{C_0^2}\|K\textbf{1}_{|A|\leq\Lambda}(A)\psi^{(k)}\|^2\leq\frac{27\|K\|^2(\Lambda q)^{2k}}{C_0^2k!^2}k!^2q^{-2k}.
\end{equation*}Pick now $q$ sufficiently small, such that all pre-factors of $k!^2q^{-2k}$ are less than $1/6$ and observe that this can be done uniformly in $k$. Then, we obtain for our specified $q$
\begin{equation*}
\|\psi^{(k-1)}\|\leq (k-1)!q^{-(k-1)}\Longrightarrow\|\psi^{(k)}\|\leq k!q^{-k}.
\end{equation*}This proves that $\psi$ is an analytic vector for $A$, given Condition (\ref{eq:condad2}).
\begin{itemize}
\item[\underline{\textit{Step 2.}}]
\end{itemize}
We first compute the multiple commutators of $h$. For some $n_0\in\mathds{N}$, see Section \ref{mourre}, the function
\begin{equation*}
h(x)=-\frac{1}{2}((i-x/n_0)^{-1}+(-i-x/n_0)^{-1})
\end{equation*}and (\ref{eq:f}) satisfy Condition \ref{condmourre1}. It follows from Condition \ref{condad} and (\ref{eq:Mortensformulaunbounded}) in the proof of Lemma \ref{lemmaCkproperty} that the multiple commutators of $h$ may be expressed in terms of the multiple commutators of $J(z):=(z-H/n_0)^{-1}$,
\begin{equation}
\label{eq:multcommJ}
\ad_A^k(J(\pm i))=n_0^{-k}\sum_{a\in C(k)}\frac{k!}{a_1!\cdot\dots\cdot a_{n_a}!}J(\pm i)\prod_{i=1}^{n_a}\ad_A^{a_i}(H)J(\pm i),
\end{equation}for any $z$ in the resolvent set of $H$. The number of elements in $C(k)$ is given by $2^{k-1}-1$, which may be verified by induction. Thus, we may estimate (\ref{eq:multcommJ}) further in virtue of (\ref{eq:condad}).
\begin{eqnarray*}
\|\ad_A^k(J(\pm i))\|&\leq&k!v^{-k}(2^{k-1}-1)\leq k!w^{-k}\left(\frac{2w}{v}\right)^{k}.
\end{eqnarray*}Choose now $2w\leq v$ and conclude as in \emph{Step 1} by induction that for $h$, Condition \ref{condad} implies (\ref{eq:condad2}) and in particular, $h\in C^{\infty}(A)$. It is obvious that $f_{\mathrm{ana}}$ gives the same bounds, which completes the proof.
\end{proof}
\begin{remark}
\begin{enumerate}
\item If we had used $\arctan(x)$ instead of $h(x)=x(1+x^2)^{-1}$, we would have encountered the problem that the bounds (\ref{eq:condad2}) are easily obtained from (\ref{eq:condad}) in graph norm w.r.t. $H$, only. In contrast, the decay at infinity of our choice of $h$ allows naturally for bounds in operator norm.
\item Note, that the first step in the proof uses the relations (\ref{eq:condad2}) only and is, abstractly, independent of the stronger assumption (\ref{eq:condad}).
\end{enumerate}
\end{remark}

\section{The Mourre estimate in localised form}
\label{mourre}
The Mourre estimate is usually cast in a different form than it is used here. Let $H,A$ be self-adjoint operators, $H\in C^1(A)$. Let now $\tilde{C}_0>0$ and $\tilde{K}$ be a compact operator. We denote by $\textbf{1}_{I}(H)$ spectral projections of $H$ for an interval $I\subset\mathds{R}$. Suppose, that in the sense of quadratic forms on $\mathcal{H}\times\mathcal{H}$
\begin{equation}
\textbf{1}_{I}(H)i[H,A]\textbf{1}_{I}(H)\geq\tilde{C}_0\textbf{1}_{I}(H)-\tilde{K}.
\label{eq:mourre1}
\end{equation}This inequality is usually referred to as a \emph{Mourre estimate}. Choose $f_{\mathrm{loc}}\in C^{\infty}_c(\mathds{R})$ such that $\supp(f_{\mathrm{loc}}(H))\subseteq I$ and $f_{\mathrm{loc}}(\lambda)=1$. Set $f_{loc,\perp}:=1-f_{\mathrm{loc}}$. Then, multiplying (\ref{eq:mourre1}) from the left and the right with $f_{\mathrm{loc}}(H)$ yields
\begin{equation*}
f_{\mathrm{loc}}i[H,A]f_{\mathrm{loc}}\geq\tilde{C}_0+\tilde{C}_0f_{\mathrm{loc},\perp}^2-2\tilde{C}_0f_{loc,\perp}-K,
\end{equation*}where $K:=f_{\text{loc}}\tilde{K}f_{\text{loc}}$ is compact. As forms we observe $\forall\epsilon>0$
\begin{equation*}
2f_{\mathrm{loc},\perp}\leq\epsilon+\frac{1}{\epsilon}f_{loc,\perp}^2.
\end{equation*}Pick $\epsilon=1/4$. Therefore, we may rewrite (\ref{eq:mourre2}) as
\begin{equation}
\label{eq:peterpaul1}
f_{\mathrm{loc}}i[H,A]f_{\mathrm{loc}}\geq\tilde{C}_0\frac34-3\tilde{C}_0f_{loc,\perp}^2-K.
\end{equation}Let $h\in\mathscr{B}$. Set $h(t):=h(t-\lambda)$. By possibly shrinking the support of $f_{\mathrm{loc}}$ we may assume $\supp(f_{\mathrm{loc}})\subseteq\supp(h_{\lambda})$. To avoid obscuring the computations notationally, we refrain from writing $h_{\lambda}$ and use $h$ instead. Set $h_n(t):=nh(t/n)$, $\forall t\in\mathds{R}$  and abbreviate $K_n(z):=(z-H/n)^{-1}$. Then, by similar arguments as in Lemma \ref{uniformtaylor},
\begin{eqnarray}
f_{\mathrm{loc}}i\ad_A(h_n)f_{\mathrm{loc}}&=&f_{\mathrm{loc}}h'_ni\ad_A(H)f_{\mathrm{loc}}+R,
\label{eq:hntaylor}
\end{eqnarray}where
\begin{equation*}
R:=\frac{1}{2\pi n}\int_{\mathds{C}}\bar{\partial}\widetilde{\left(\frac{h}{t}\right)}(z)zK_n(z)^2f_{\mathrm{loc}}[\ad_A(H),H]f_{\mathrm{loc}}K_n(z)dz\wedge d\bar{z}.
\end{equation*}Note that
\begin{equation*}
f_{\mathrm{loc}}i\ad_A(H)f_{\mathrm{loc}}=f_{\mathrm{loc}}\mathbf{1}_I(H)i\ad_A(H)\mathbf{1}_I(H)f_{\mathrm{loc}}
\end{equation*}is a bounded operator on $\mathcal{H}$. Analogue estimates as in the proof of Lemma \ref{uniformtaylor} yield
\begin{equation*}
\|R\|\leq \frac{C}{n},
\end{equation*}for a $C\geq0$. This gives
\begin{eqnarray*}
\|f_{\mathrm{loc}}i\ad_A(H-h_n)f_{\mathrm{loc}}\|&\leq&\|(\mathbb{1}-h'_n)f_{\mathrm{loc}}i\ad_A(H)f_{\mathrm{loc}}\|+\frac{C}{n}\\
&\leq&C'\left(\|(\mathbb{1}-h'_n)\textbf{1}_{\supp(f_{\mathrm{loc}})}(H)\|+\frac{1}{n}\right),
\end{eqnarray*}for some $C'>0$. Taylor's theorem implies for positive $t\in\supp(f_{\mathrm{loc}})$
\begin{equation*}
|1-h_n'(t)|\leq\intop_0^{\frac{t}{n}}|h''(s)|ds\leq\frac{\sup_{t\in\supp(f_{\mathrm{loc}})}|t|}{n}\sup_{s\in\supp(f_{\mathrm{loc}})}|h''(s)|
\end{equation*}and analogously for negative $t\in\supp(f_{\mathrm{loc}})$. Thus, there is a $C''>0$ such that
\begin{equation*}
f_{\mathrm{loc}}i\ad_A(H-h_n)f_{\mathrm{loc}}\leq \frac{C''}{n}.
\end{equation*}
Choose $n_0\in\mathds{N}$ large enough such that
\begin{equation}
f_{\mathrm{loc}}i\ad_A(H-h_{n_0})f_{\mathrm{loc}}\leq \frac{\tilde{C}_0}{4}.
\label{eq:mourre2}
\end{equation}
Using $f_{loc,\perp}=1-f_{\mathrm{loc}}$ we obtain from (\ref{eq:mourre2}), (\ref{eq:peterpaul1})
\begin{equation}
i[h_{n_0},A]\geq\frac{\tilde{C}_0}{2}-3\tilde{C}_0f_{loc,\perp}^2-K-f_{loc,\perp}i[h_{n_0},A]f_{loc,\perp}-2\Re(f_{loc,\perp}i[h_{n_0},A]),
\label{eq:mourre3}
\end{equation}Note, that all operators appearing in (\ref{eq:mourre3}) are self-adjoint. With
\begin{eqnarray*}
f_{loc,\perp}i\ad_A(h_{n_0})f_{loc,\perp}&\leq&\|\ad_A(h_{n_0})\|f_{loc,\perp}^2,\\
\forall\delta>0:\hspace{1cm}\pm2\Re(f_{loc,\perp}i\ad_A(h_{n_0}))&\leq&\delta\|\ad_A(h_{n_0})\|^2+\frac{1}{\delta}f_{loc,\perp}^2,
\end{eqnarray*}and a choice of $\delta$ such that $\delta\|\ad_A(h_{n_0})\|^2\leq\tilde{C}_0/4$ we find
\begin{equation}
i[h_{n_0},A]\geq C_0-C_1f_{loc,\perp}^2-K,
\label{eq:mourre4}
\end{equation}where $0<C_0:=\tilde{C}_0/4$. The other constant is $C_1:=3\tilde{C}_0+\delta^{-1}+\|\ad_A(h_{n_0})\|$.

We may choose a $h$ which is real analytic and extends to an analytic function in a strip around the real axis. Thus it is possible to reformulate inequality (\ref{eq:mourre4}) using analytic functions only; a fact we rely on in the proof of our analyticity result.

Consider the real analytic function
\begin{equation}
f_{\mathrm{ana}}(x):=\frac{1}{1+(x-\lambda)^2}=\frac12\left(\frac{1}{1+i(x-\lambda)}+\frac{1}{1-i(x-\lambda)}\right),\forall x\in\mathds{R}.
\label{eq:f}
\end{equation}Replacing the constant $C_1$ with
\begin{equation*}
C_1\sup_{x\in\mathds{R}}\left(\frac{f_{\mathrm{loc},\perp,}(x)}{f_{\mathrm{ana},\perp}(x)}\right),
\end{equation*}where $f_{\mathrm{ana},\perp}:=1-f_{\mathrm{ana}}$, we may rewrite the Mourre estimate (\ref{eq:mourre4}) as
\begin{equation}
i[h,A]\geq C_0-C_1f_{\mathrm{ana},\perp}^2-K.
\label{eq:mourre5}
\end{equation}We denote the constant in front of $f_{\mathrm{ana},\perp}^2$ in a slight abuse of notation again with $C_1$.

\subsection*{Acknowledgements}
M. Westrich thanks Johannes-Gutenberg Universit{\"a}t Mainz, and V. Bach in particular, for support.  Moreover, both authors thank the Erwin Schr{\"o}dinger Institut (ESI) for hospitality, and in the case of M. Westrich for financial support in the form of a "Junior Research Fellowship".


\end{document}